\documentclass[11pt]{amsart}
\usepackage{amsmath,amssymb,amsfonts,amsbsy}
\usepackage{tikz-cd}
\usetikzlibrary{shapes}
\usepackage{amsbsy}
\usepackage{amscd}
\usepackage{wasysym}
\usetikzlibrary{matrix,arrows,decorations.pathmorphing}
\usepackage{graphicx}
\usepackage{float}
\usepackage{setspace}
\usepackage[margin=1.0in]{geometry}
\usepackage{tikz}
\usetikzlibrary{quantikz2}
\usepackage{hyperref} 
\usepackage{verbatim}

\newtheorem{thm}{Theorem}[section]
\newtheorem{lem}[thm]{Lemma}
\newtheorem{cor}[thm]{Corollary}

\newtheorem*{structurethm}{Structure Theorem}

\theoremstyle{definition}
\newtheorem{example}[thm]{Example}
\theoremstyle{definition}
\newtheorem{defn}[thm]{Definition}
\theoremstyle{definition}
\newtheorem{remark}[thm]{Remark}

\newcommand{\mbb}{\mathbb}
\newcommand{\mc}{\mathcal}

\newcommand{\tr}{\textrm{Tr}}
\newcommand{\id}{\textrm{id}}
\newcommand{\op}[2]{|#1\rangle\langle #2|}
\newcommand{\ip}[2]{\langle #1| #2 \rangle}

\newcommand{\wt}{\widetilde}

\newcommand{\T}[1]{\widetilde{#1}} 
\newcommand{\B}[1]{\overline{#1}} 

\newcommand{\rmv}[1]{}

\def\b0{{\bf 0}}

\def\){{\right)}}
\def\({{\left(}}

\def\ket#1{| #1 \rangle}
\def\bra#1{\langle #1 |}
\def\kb#1#2{|#1\rangle\!\langle #2 |}

\definecolor{cool_green}{rgb}{0.0, 0.5, 0.0}

\begin{document}

\title[Structure Theorem for Quantum Replacer Codes]{Structure Theorem for Quantum Replacer Codes}
\author[E.~Chitambar, S.~Hagen, D.W.~Kribs, M.I.~Nelson, A.~Nemec]{Eric~Chitambar$^1$, Sarah~Hagen$^2$, David~W.~Kribs$^3$, Mike~I.~Nelson$^4$, Andrew~Nemec$^5$}

\address{$^1$Department of Electrical and Computer Engineering, University of Illinois at Urbana Champaign, 901 West Illinois Street, Urbana, IL 61801 USA}
\address{$^2$Department of Physics, University of Illinois at Urbana Champaign, 1110 West Green Street, Urbana, IL 61801 USA}
\address{$^3$Department of Mathematics \& Statistics, University of Guelph, 50 Stone Road East, Guelph, ON Canada N1G 2W1}
\address{$^4$Materials Research Laboratory at The Grainger College of Engineering, University of Illinois at Urbana Champaign, 901 West Illinois Street, Urbana, IL 61801 USA}
\address{$^5$Department of Computer Science, University of Texas at Dallas, 890 Franklyn Jenifer Drive, Richardson, TX 75080 USA}

\begin{abstract}
Quantum replacer codes are codes that can be protected from errors induced by a given set of quantum replacer channels, an important class of quantum channels that includes the erasures of subsets of qubits that arise in quantum error correction. We prove a structure theorem for such codes that synthesizes a variety of special cases with earlier theoretical work in quantum error correction. We present several examples and applications of the theorem, including a mix of new observations and results together with some subclasses of codes revisited from this new perspective.   
\end{abstract}


\keywords{quantum error correction, quantum replacer channel, quantum erasure code, quantum secret sharing, private quantum code, unitarily recoverable code, complementary quantum channels, stabilizer code.}

\maketitle

\section{Introduction}

Quantum error correction is a central topic in quantum information, lying at the heart of many theoretical and experimental investigations in the field \cite{lidar2013quantum}.
An important class of quantum channels that includes erasures of qubits which appear in many quantum error correction settings are called quantum  replacer channels.
Such channels destroy all information encoded into subsystems of a quantum system Hilbert space and replace all states on the subsystems with the same fixed state.
Quantum replacer codes are quantum error-correcting codes that can be protected and recovered from errors induced by a given set of quantum replacer channels.
The study of such codes can give a window into the general theory of quantum error correction, given their wide applicability. 

Foremost amongst the applications of such codes are quantum secret sharing schemes, a foundational notion in quantum communication and quantum cryptography \cite{cleve1999share}.
Secret sharing (as well as other multi-party quantum cryptographic protocols) is an important application in the context of quantum networks and distributed quantum computing systems.
Heterogeneous quantum networks, in particular, consisting of processing nodes made from different qubit implementations, joined with interconnects, motivate some important considerations \cite{deLeonetal}.
For instance, these varying nodes have strengths and weaknesses when their performance parameters (e.g., coherence times, gate fidelities, etc.) are compared.
Incorporating these considerations into the identification and construction of codes to implement desired schemes can potentially provide performance benefits.
This naturally motivates a better understanding of replacer code structure theory. 

Additionally, recent years have seen replacer codes arise in a variety of other settings in quantum information and its applications.
We mention two instances here. 
Very recent research shows that various types of quantum errors can be converted to the more easily handled class of erasure errors, depending on the platform used \cite{WKST22,Ketal23,KCB23}.
These conversions have been implemented for both Rydberg atoms \cite{Scholl23,Ma2023} and superconducting circuits through so-called erasure qubits \cite{GVRK24,Letal24}.
Further, over the past decade in black hole theory, the central AdS/CFT correspondence has been reformulated using the framework of quantum error correction~\cite{AXH15,Harlow2017,Hayden2019,akers2019,Akers2022}, with an emphasis on quantum erasure codes and secret sharing schemes in the analysis \cite{AXH15}. 

Given the importance of quantum replacer codes and the recent interest and variety of applications in which they are found, one is motivated to investigate the general theory for these codes.
In this paper, we prove what can be viewed as a structure theorem for quantum replacer codes that establishes a number of equivalent conditions for a code to be correctable for quantum replacer channels.
The conditions include a mix of alternative descriptions given in terms of the channels, explicitly as code states, and as an information-theoretic condition.
In addition, the theorem brings together work from nearly two decades ago, on the general theory of quantum error correction, with some of the more recent characterizations of subclasses of replacer codes.
As consequences of the theorem, we present a number of applications and examples as outlined below. 

This paper is organized as follows.
In Section~\ref{sec:preliminaries}, we give our notation and briefly review the notions required in the rest of the paper, including the basic framework of quantum error correction and replacer codes specifically.
The main result and its proof, which includes some preparatory results, are included in Section~\ref{sec:structure-thm}, along with explanatory notes on specific aspects of the proof, a conceptual viewpoint on the result, and a description of how to compute the key components thereof.
In Section~\ref{sec:gallery}, we present several applications and examples, which include a mix of new and revisited examples, new results, and alternate proofs of previous results based on the theorem that yield new information.
We finish in Section~\ref{sec:conclusion} with some concluding and forward-looking remarks.

\section{Preliminaries}\label{sec:preliminaries}

We shall consider quantum codes identified with subspaces $S$ of $n$-qudit Hilbert space $\mathcal H = (\mathbb{C}^d)^{\otimes n}$, which has orthonormal basis $\{ \ket{i_1 \cdots i_n} = \ket{i_1} \otimes \ldots \otimes \ket{i_n}  \, : \, 0 \leq i_j \leq d-1 \}$ determined by computational bases $\{ \ket{0}, \ldots , \ket{d-1} \}$ for each of its $n$ subsystems. When referring to a subset of size $k$ of the $n$ qudits, we will use the notation $E = (\mathbb{C}^d)^{\otimes k}$ both for the $k$-qudit Hilbert space they define and for the subset of qudits themselves (the context will be clear), and we will use the notation $\overline{E}$ for the complementary $n-k$ qudits. Outside of examples presented, we will not need to refer to the exact positions of qudits, and so, for instance, we will write $\mathcal H = \overline{E} \otimes E = \overline{E} E$ with the understanding that $E$ can be defined by any subset of $k$ qudits. As a notational convenience, we will sometimes use the shorthand $|E|$ for the quantity $\dim (E)$. 

We will use standard notation for vector states and density operators, such as $\ket{\psi}, \ket{\phi}\in \mathcal H$ and $\rho , \sigma\in \mathcal L(\mathcal H)$, where the latter set denotes the set of operators on $\mathcal H$. As $\mathcal H$ is finite-dimensional in this paper, we will not use separate notation for sets of bounded operators and trace-class operators, rather our individual notation will indicate operator types, like $X$, $U$ for general operators and then density operators as noted above. Subscripts will be used when needed to denote the support spaces of states and the spaces on which operators act or map between; so, for instance, a density operator $\sigma_E$ belongs to $\mathcal L(E)$, a state $\ket{\psi}_E$ belongs to $E$, and an operator $W_{A\rightarrow E}$ maps $A$ to $E$. (We will reserve superscripts for partially traced states as noted below.) Further, by a {\it quantum channel} \cite{nielsen2010quantum,paulsen2002completely}, mathematically we will mean a completely positive trace-preserving map $\mathcal E: \mathcal L(\mathcal H) \rightarrow \mathcal L(\mathcal K)$ between the sets of operators on two Hilbert spaces $\mathcal H, \mathcal K$. 

\begin{defn}\label{def:q-replacer-channels}
For every subset $E$ of qudits on $\mathcal H =  \overline{E} E$, we have a family of {\it quantum replacer channels} defined on $\mathcal H$ by the maps 
\[ 
\mathcal E_E = \mathrm{id}_{\overline{E}} \otimes \mathcal D_{E} : \mathcal L(\mathcal H) \rightarrow \mathcal L(\mathcal H), 
\]
where $\mathrm{id}_{\overline{E}}$ is the identity map on $\mathcal L(\overline{E})$ and $\mathcal D_{E}: \mathcal L(E) \rightarrow \mathcal L(E)$ is a channel (sometimes called a `private channel') that maps all states on $E$ to a fixed state; that is, there is a $\sigma_E \in \mathcal L(E)$ such that 
\[
\mathcal D_{E}(\rho) = \sigma_E \quad  \forall \rho \in \mathcal L(E).  
\]
As a shorthand, we will sometimes refer to the maps $\mathcal E_E$ as {\it $E$-replacer channels.} 
\end{defn} 

\begin{remark}
The terminology used for these channels and error models varies somewhat across the literature. For instance, sometimes they are simply referred to as `erasures' (we will reserve that term for a distinguished special case, as noted just below). This is likely a consequence of the wide variety of settings in quantum information in which the notion appears (e.g., quantum optics, quantum error correction, quantum secret sharing, etc). We think the replacer designation is most appropriate, at least for our purposes. It is used similarly, for example, in \cite{cooney2016strong}. We further note that requiring the output state of the private channel $\mathcal D_E$ to belong to $E$ is made to streamline the presentation. Our results readily extend to when the state does not belong to $E$ as well as partial replacer channels (see Remark~\ref{partialreplacernote} for details). 

Note that any operator $X$ in the range of a map $\mathcal E_E$ can be written in the form $X = X_{\overline{E}} \otimes \sigma_E$ for some $X_{\overline{E}} \in \mathcal L({\overline{E}})$. 
Distinguished amongst private channels is the {\it completely depolarizing channel} given by  $\mathcal D_{E,CD}(\rho_E) = (\dim E)^{-1} I_E$ for all $\rho_E$, where $I_E$ is the identity operator on $E$.
We will use the terminology {\it erasure channel} to specifically refer to the channels $\mathcal E_{E,CD}= \mathrm{id}_{\overline{E}} \otimes \mathcal D_{E,CD}$. 
We note an operator algebra viewpoint on the map $\mathcal E_{E,CD}$, in particular that it is the (unique) trace preserving conditional expectation of $\mathcal L(\mathcal H)$ onto the algebra $\mathcal A = \mathcal L(\overline{E})\otimes I_E$ \cite{paulsen2002completely,pereira2006representing,church2012private}. 
\end{remark} 

The partial trace map over $E$'s qudits will be denoted $\tr_E : \mathcal L(\mathcal H) \rightarrow \mathcal L(\overline{E})$. 
At times we will need to consider tracing out the same set of qudits when viewed inside a different space, and in such an instance we will use the same partial trace notation with an added explanation. 
Further, when referring to individual reduced density operators we will use superscripts, so for $\rho\in\mathcal L(\overline{E}E)$, we have $\rho^{E} = \tr_{\overline{E}}(\rho)$.  

To define quantum error-correcting codes, we use channels $\mathcal E: \mathcal L(\mathcal H) \rightarrow \mathcal L(\mathcal K)$ to describe error or noise models. Given a (quantum code) subspace $S$ of $\mathcal H$, we say that {\it $S$ is correctable for $\mathcal E$} if there is a channel (called the recovery map) $\mathcal R: \mathcal L(\mathcal K) \rightarrow \mathcal L(\mathcal H)$ such that 
\[
(\mathcal R \circ \mathcal E)(\widetilde{\rho}) = \widetilde{\rho}
\]
for all $\widetilde{\rho}$ supported on $S$. Given a code $S$, we shall use the notation $\ket{\widetilde{i}}$ and $\widetilde{\rho}$ to distinguish code basis states and operators supported on $S$. Explicitly, if  $\{ \ket{\widetilde{i}} \}_i$ is a basis for $S$, then by the {\it operators supported on $S$} we mean all the operators of the form $\widetilde{\rho} = \sum_{i,j} p_{ij} \kb{\widetilde{i}}{\widetilde{j}}$. We can also formulate the error correction statement entirely at the level of maps as follows: 
\begin{equation}\label{superoperatorform}
\mathcal R \circ \mathcal E \circ \mathcal P_S = \mathcal P_S , 
\end{equation}
where $\mathcal P_S (\rho) = P_S \, \rho \, P_S$ is the `compression map' on $\mathcal L(\mathcal H)$ defined by the orthogonal projection $P_S$ of $\mathcal H$ onto $S$.

Thus, by a {\it quantum replacer code}, we mean a correctable code $S$ for a given replacer channel $\mathcal E_E$. 
Note that a simple upper bound on the size of such codes can be obtained from Eq.~(\ref{superoperatorform}) applied to $\mathcal E_E$ and recalling the form of operators in the range of replacer channels. Specifically, viewing the maps as operators acting on the space of operators (i.e., `superoperators'), we can use basic linear algebra to estimate as follows: 
\[
(\dim S)^2 = \mathrm{rank}(\mathcal P_S) = \mathrm{rank}(\mathcal R \circ \mathcal E_E \circ \mathcal P_S) \leq \mathrm{rank} (\mathcal E_E) = (\dim \overline{E})^2, 
\]
where the ranks calculated are the dimensions of the (operator) range spaces of the maps. 
Hence we have $\dim S \leq \dim \overline{E}$ for any code $S$ that is correctable for the replacer $\mathcal E_E$. Built into our main theorem is a more refined dimension estimate, which also involves an ancilla space to be specified later. 

Note that every channel $\mathcal E$ has operator-sum form $\mathcal E(\rho) = \sum_a E_a \rho E_a^\dagger$ where $E_a$ are (`Kraus') operators mapping $\mathcal H$ to $\mathcal K$ \cite{nielsen2010quantum}. For replacer channels $\mathcal E_E$, in the theory of the next section we will not use the specific operator-sum form, though it will appear in some of the subsequent remarks,  examples and results. (In particular, we will connect our results with the Knill-Laflamme description of quantum error correcting codes via Kraus operators \cite{knill1997theory}.) It is easy to see, for instance, that $\mathcal E_{E,CD}$ has a minimal set of $(\dim E)^2$ Kraus operators given by 
$\{ {\scriptstyle (\sqrt{\dim E} )^{-1} } ( I_{\overline{E}} \otimes \kb{i_E}{j_E} ) \},$
where $\{\ket{i_E}\}$ is an orthonormal basis for $E$. We also note that a general upper bound (usually crude) on the number of syndrome subspaces for a quantum error correcting code for a channel $\mathcal E$ is given by its minimal number of Kraus operators (which also coincides with the rank of the `Choi matrix' \cite{C75} for the channel).

\section{Structure Theorem}\label{sec:structure-thm}

Before stating the theorem, we present some preparatory results and further background on quantum error correction that is relevant for describing these codes. We begin with a simple observation on the joint correctability of replacers and partial traces. 

\begin{lem}\label{jointcorrectable}
Given a set of qudits $E$ on an $n$-qudit Hilbert space $\mathcal H = \overline{E} E$, for every replacer channel $\mathcal E_E$ we have,  
\[
\tr_E \circ \mathcal E_E = \tr_E . 
\]
Further, a code $S \subseteq \mathcal H$ is correctable for $\mathcal E_E$ if and only if it is correctable for $\tr_E$ (and in particular the code is correctable for all $E$-replacer channels simultaneously). 

\begin{proof} 
The first statement is obvious from the definitions of the maps. To prove the correctable equivalence we will use the map formulation for error correction given in Eq.~(\ref{superoperatorform}).  

Suppose $S$ is correctable for $\tr_E$, so there is a channel $\mathcal R : \mathcal L(\overline{E}) \rightarrow \mathcal L(\mathcal H)$ such that $\mathcal R \circ \tr_E \circ \mathcal P_S = \mathcal P_S$. Then defining $\mathcal R^\prime = \mathcal R\circ \tr_E : \mathcal L(\mathcal H) \rightarrow \mathcal L(\mathcal H)$ yields, 
\[
\mathcal R^\prime \circ \mathcal E_E \circ \mathcal P_S =  (\mathcal R \circ \tr_E)\circ \mathcal E_E \circ \mathcal P_S = \mathcal R \circ (\tr_E \circ \mathcal E_E) \circ \mathcal P_S = \mathcal R \circ \tr_E \circ \mathcal P_S = \mathcal P_S , 
\]
and so $S$ is correctable for $\mathcal E_E$. 

Conversely, if $S$ is correctable for $\mathcal E_E$, then there is a channel $\mathcal R : \mathcal L(\mathcal H) \rightarrow \mathcal L(\mathcal H)$ such that $\mathcal R \circ \mathcal E_E \circ \mathcal P_S = \mathcal P_S$. Let $\mathcal F :  \mathcal L(\overline{E}) \rightarrow \mathcal L(\mathcal H)$ be the `ampliation channel' defined by 
\[
\mathcal F ( \rho_{\overline{E}} ) = \rho_{\overline{E}} \otimes \sigma_E , 
\]
where $\rho_0$ is the output state of $\mathcal D_E$ in $\mathcal E_E = \mathrm{id}_{\overline{E}}\otimes \mathcal D_E$, and note that $\mathcal E_E = \mathcal F \circ \tr_E$. Then put $\mathcal R^\prime = \mathcal R \circ \mathcal F$, and observe that, 
\[
\mathcal R^\prime \circ \tr_E \circ \mathcal P_S =  (\mathcal R \circ \mathcal F)\circ \tr_E \circ \mathcal P_S = \mathcal R \circ (\mathcal F \circ \tr_E) \circ \mathcal P_S = \mathcal R \circ \mathcal E_E \circ \mathcal P_S = \mathcal P_S , 
\]
and so $S$ is correctable for $\tr_E$. 
\end{proof}
\end{lem}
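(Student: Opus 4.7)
The plan is to handle the two assertions separately; both should follow from direct verification using the tensor-product structure $\mathcal E_E = \mathrm{id}_{\overline E} \otimes \mathcal D_E$ together with the map-level formulation of correctability in Eq.~(\ref{superoperatorform}).

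For the identity $\tr_E \circ \mathcal E_E = \tr_E$, I would check both sides on a generic product operator $X_{\overline E} \otimes Y_E$ and extend by linearity. Applying $\mathcal E_E$ yields $X_{\overline E} \otimes \mathcal D_E(Y_E)$, and since $\mathcal D_E$ is trace-preserving, tracing out $E$ gives $\tr(\mathcal D_E(Y_E))\, X_{\overline E} = \tr(Y_E)\, X_{\overline E} = \tr_E(X_{\overline E} \otimes Y_E)$.

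For the equivalence of correctability, the approach is to exhibit explicit simulations of each channel from the other by CPTP maps, and then transport a recovery map accordingly. In one direction, if $\mathcal R$ recovers $S$ from $\tr_E$, then $\mathcal R \circ \tr_E$ is a candidate recovery for $\mathcal E_E$: plugging into the condition and applying the first identity immediately collapses $(\mathcal R \circ \tr_E) \circ \mathcal E_E \circ \mathcal P_S$ to $\mathcal R \circ \tr_E \circ \mathcal P_S = \mathcal P_S$. For the other direction, I would observe that $\mathcal E_E$ factors as $\mathcal F \circ \tr_E$, where $\mathcal F$ is the ampliation channel $\rho_{\overline E} \mapsto \rho_{\overline E} \otimes \sigma_E$ with $\sigma_E$ the fixed output state of $\mathcal D_E$. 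Then, given a recovery $\mathcal R$ for $\mathcal E_E$, the composition $\mathcal R \circ \mathcal F$ is CPTP (so a valid recovery map on $\mathcal L(\overline E)$) and the analogous chain of substitutions gives $(\mathcal R \circ \mathcal F) \circ \tr_E \circ \mathcal P_S = \mathcal R \circ \mathcal E_E \circ \mathcal P_S = \mathcal P_S$.

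I do not expect any serious obstacle; the statement is essentially a formal consequence of the fact that $\tr_E$ and $\mathcal E_E$ are mutually simulable by CPTP maps, making them equivalent as noise models in the sense of correctability. The only small conceptual point worth flagging is that the argument never uses the specific choice of $\sigma_E$, so the ``simultaneous correctability for all $E$-replacer channels'' parenthetical drops out as a free consequence of the same construction.
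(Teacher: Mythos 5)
Your proposal is correct and follows essentially the same route as the paper: the identity $\tr_E \circ \mathcal E_E = \tr_E$ verified from trace preservation of $\mathcal D_E$, then transporting the recovery map via $\mathcal R \circ \tr_E$ in one direction and via the ampliation factorization $\mathcal E_E = \mathcal F \circ \tr_E$ with $\mathcal R \circ \mathcal F$ in the other. Your closing observation that the first direction is independent of $\sigma_E$, which yields the simultaneous-correctability parenthetical, matches the paper's intent as well.
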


We next recall what in essence can be viewed as a particular manifestation of the no-cloning theorem in this setting. To state it we need to define two notions. Given a unitary $U \in \mathcal L(\mathcal H)$ 
acting on our Hilbert space $\mathcal H = \overline{E}E$, we can define a pair of {\it complementary channels} \cite{holevo2007complementary,king2005properties}: $\mathcal E_1 = \tr_E \circ \mathcal U : \mathcal L(\mathcal H) \rightarrow \mathcal L(\overline{E})$ and $\mathcal E_2 = \tr_{\overline{E}} \circ \mathcal U : \mathcal L(\mathcal H) \rightarrow \mathcal L(E)$ where $\mathcal U(\rho) = U \rho U^\dagger$.  Further, a subspace $S\subseteq \mathcal H$ is said to be a {\it private code} \cite{ambainis2000private,bartlett2005random,bartlett2004decoherence,boykin2003optimal,crann2016private,jochym2013private,kribs2021approximate,levick2017quantum} for a channel $\mathcal E: \mathcal L(\mathcal H) \rightarrow \mathcal L(\mathcal K)$ if there is a $\sigma \in \mathcal L(\mathcal K)$ such that $\mathcal E(\widetilde{\rho}) = \sigma$ for all $\widetilde{\rho}$ supported on $S$. 

As established in \cite{kretschmann2008complementarity}, private codes are complementary to error-correcting codes. We state the special case of this result that we will make use of. 

\begin{lem}\label{complemma}
    Let $U \in \mathcal L(\mathcal H)$ be a unitary operator on $\mathcal H = \overline{E}E$, and consider the complementary pair of channels $\mathcal E_1 = \tr_E \circ \mathcal U$ and $\mathcal E_2 = \tr_{\overline{E}} \circ \mathcal U$. Then a subspace $S$ of $\mathcal H$ is correctable for one of these channels if and only if it is private for the other channel.     
\end{lem}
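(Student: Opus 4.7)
The plan is to first absorb the unitary $U$ into the code subspace, then prove the remaining equivalence directly via the Knill-Laflamme conditions. Since $\mathcal{U}$ is a bijection on $\mathcal{L}(\mathcal{H})$, setting $\tilde{S} := U(S)$, any recovery channel for $\mathcal{E}_1 = \tr_E \circ \mathcal{U}$ on $S$ composed with $\mathcal{U}^{\pm 1}$ becomes a recovery for $\tr_E$ on $\tilde{S}$, and vice versa; likewise $\mathcal{U}$ carries density operators on $S$ bijectively to density operators on $\tilde{S}$, so privacy of $\mathcal{E}_2 = \tr_{\overline{E}} \circ \mathcal{U}$ on $S$ is equivalent to privacy of $\tr_{\overline{E}}$ on $\tilde{S}$. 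The lemma therefore reduces to the symmetric statement that $\tilde{S}$ is correctable for $\tr_E$ if and only if $\tilde{S}$ is private for $\tr_{\overline{E}}$, after which the other pairing follows immediately by interchanging the roles of $E$ and $\overline{E}$.

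For this reduced claim, I would write down a minimal Kraus representation for $\tr_E$, namely $\{M_i = I_{\overline{E}} \otimes \bra{i_E}\}$ for an orthonormal basis $\{\ket{i_E}\}$ of $E$, and apply the Knill-Laflamme characterization of correctable codes. Correctability of $\tilde{S}$ for $\tr_E$ is equivalent to the existence of scalars $c_{ij}$ with
\[
P_{\tilde{S}}\bigl(I_{\overline{E}} \otimes \op{i_E}{j_E}\bigr)P_{\tilde{S}} = c_{ij}\,P_{\tilde{S}}
\]
for all $i,j$. The key bridging identity is
\[
\bra{\tilde{\phi}}\bigl(I_{\overline{E}} \otimes \op{i_E}{j_E}\bigr)\ket{\tilde{\psi}} = \bra{j_E}\tr_{\overline{E}}\bigl(\op{\tilde{\psi}}{\tilde{\phi}}\bigr)\ket{i_E},
\]
which transforms the Knill-Laflamme conditions into the requirement that $\tr_{\overline{E}}(\op{\tilde{\psi}}{\tilde{\psi}})$ equals a single fixed operator $\sigma \in \mathcal{L}(E)$ for every unit $\ket{\tilde{\psi}} \in \tilde{S}$, together with $\tr_{\overline{E}}(\op{\tilde{\psi}}{\tilde{\phi}}) = 0$ whenever $\ket{\tilde{\psi}},\ket{\tilde{\phi}}$ are orthogonal vectors in $\tilde{S}$. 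These two properties extend by linearity over density operators to the privacy condition $\tr_{\overline{E}}(\tilde{\rho}) = \sigma$ for all $\tilde{\rho}$ supported on $\tilde{S}$.

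The main obstacle will be the converse direction. Starting from privacy, the hypothesis $\tr_{\overline{E}}(\tilde{\rho}) = \sigma$ a priori only supplies the \emph{diagonal} condition $\tr_{\overline{E}}(\op{\tilde{k}}{\tilde{k}}) = \sigma$ on pure code states, whereas the Knill-Laflamme conditions also demand the \emph{off-diagonal} identities $\tr_{\overline{E}}(\op{\tilde{k}}{\tilde{\ell}}) = 0$ for orthogonal code basis vectors. I would extract these via a standard polarization argument: applying privacy to the four unit density operators built from $\ket{\tilde{k}} \pm \ket{\tilde{\ell}}$ and $\ket{\tilde{k}} \pm i\ket{\tilde{\ell}}$ pins down both the Hermitian and anti-Hermitian parts of $\tr_{\overline{E}}(\op{\tilde{k}}{\tilde{\ell}})$ to zero. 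Once the off-diagonal vanishing is secured, reversing the matrix-element identity above reads off $c_{ij} = \bra{j_E}\sigma\ket{i_E}$ and closes the Knill-Laflamme equivalence, completing the proof.
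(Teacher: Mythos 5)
Your proposal is correct, but it is worth noting that the paper does not actually prove this lemma at all: it invokes it as a special case of the general complementarity theorem between error-correcting and private codes established in \cite{kretschmann2008complementarity}, and the proof there is stated for general complementary channel pairs (including approximate versions) at a higher level of machinery. What you do differently is give a self-contained, elementary argument for exactly the special case needed. Your reduction step is sound: conjugating the code by $U$ (with $\mathcal U^{-1}=\mathcal U^\dagger$ also a channel, so the composed maps are legitimate recoveries) correctly trades $(\mathcal E_1,\mathcal E_2,S)$ for $(\tr_E,\tr_{\overline{E}},\widetilde S)$, and the symmetric claim then follows from the Knill--Laflamme conditions with the Kraus operators $I_{\overline{E}}\otimes\bra{i_E}$ of $\tr_E$. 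Your bridging identity $\bra{\widetilde\phi}(I_{\overline{E}}\otimes\op{i_E}{j_E})\ket{\widetilde\psi}=\bra{j_E}\tr_{\overline{E}}(\op{\widetilde\psi}{\widetilde\phi})\ket{i_E}$ checks out, and the polarization argument correctly recovers the off-diagonal conditions $\tr_{\overline{E}}(\op{\widetilde k}{\widetilde\ell})=0$ from privacy, which is the only nontrivial point in the converse direction. What your route buys is transparency and extra information: it exhibits the fixed private output state as $\sigma=\tr_{\overline{E}}(\op{\widetilde\psi}{\widetilde\psi})$ and identifies the Knill--Laflamme constants as $c_{ij}=\bra{j_E}\sigma\ket{i_E}$, which is essentially the same structural data the paper later extracts in Corollary~\ref{correctablecor} from condition~$(iv)$ of the structure theorem. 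What the paper's citation buys instead is generality (arbitrary complementary pairs arising from a Stinespring dilation, robustness statements) that is not needed for the exact, partial-trace case used in the proof of the main theorem.
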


In the proof of the structure theorem below we apply this result specifically in the case of $U=I$, and so $\mathcal U = \mathrm{id}$ is the identity map on $\mathcal L(\mathcal H)$ and the complementary pair of channels are $\tr_E$ and $\tr_{\overline{E}}$. 

As the last background result, we recall the characterization from \cite{NS06,KS06} of quantum error-correcting codes as {\it unitarily recoverable codes}. Recall that an isometry is a norm-preserving map from one Hilbert space into another. 

\begin{lem}\label{unitrecovlemma}
Let $\mathcal E : \mathcal L(\mathcal H) \rightarrow \mathcal L(\mathcal K)$ be a channel and suppose $S$ is a subspace of $\mathcal H$. Then $S$ is correctable for $\mathcal E$ if and only if there is a reference system $R\cong S$, a density matrix $\Gamma_A$ on ancilla system $A$ with $|A|= \mathrm{rank}(\Gamma) \leq \lfloor |S|^{-1}\dim\mathcal K\rfloor$, and an isometry $U: \mathbb{C}^{|S|}\otimes \mathbb{C}^{\lfloor |S|^{-1}\dim\mathcal K\rfloor} \rightarrow \mathcal K$ such that for all $\widetilde{\rho}$ supported on $S$, we have 
\begin{equation}\label{unitrecovlemmaeqn}
\mathcal E (\widetilde{\rho} ) = \mathcal U \big( \rho_R \otimes \Gamma_A \big), 
\end{equation}
where $\{ \ket{\widetilde{i}}\}$ and $\{ \ket{i}_R \}$ are bases for $S$ and $R$ respectively with $\rho_R = \op{i}{j}_R$ when $\widetilde{\rho} = \op{\widetilde{i}}{\widetilde{j}}_S$. 
\end{lem}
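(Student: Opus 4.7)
The plan is to prove both directions of the equivalence, with the forward direction resting on the Knill-Laflamme characterization of quantum error correction. The reverse direction $(\Leftarrow)$ is purely constructive: given the representation $\mathcal E(\widetilde{\rho}) = \mathcal U(\rho_R \otimes \Gamma_A)$, I build a recovery channel $\mathcal R : \mathcal L(\mathcal K) \to \mathcal L(\mathcal H)$ by first applying the inverse of $\mathcal U$ on $\mathrm{Ran}(U)$ (extended by any fixed channel on the orthogonal complement of $\mathrm{Ran}(U)$ so that the result is a channel on all of $\mathcal L(\mathcal K)$), then tracing out the ancilla $A$, and finally transferring the reference system $R$ back to $S$ through the canonical isomorphism $\ket{i}_R \leftrightarrow \ket{\widetilde{i}}$. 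Because $\Gamma_A$ is a fixed state, tracing out $A$ kills it and returns $\rho_R$, and the last identification produces $\widetilde{\rho}$, so $\mathcal R \circ \mathcal E(\widetilde{\rho}) = \widetilde{\rho}$.

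For the forward direction $(\Rightarrow)$, I would invoke the Knill-Laflamme condition: for any Kraus decomposition $\{E_a\}$ of $\mathcal E$, the subspace $S$ is correctable iff $P_S E_a^\dagger E_b P_S = \lambda_{ab} P_S$ for some positive semidefinite matrix $\Lambda = [\lambda_{ab}]$. Diagonalizing $\Lambda$ via a unitary change of Kraus operators yields a new set $\{F_a\}$ satisfying
\[
P_S F_a^\dagger F_b P_S = d_a \, \delta_{ab} P_S,
\]
where the $d_a \geq 0$ are the eigenvalues of $\Lambda$. Setting $r = \mathrm{rank}(\Lambda)$, these relations say exactly that for each $a$ with $d_a > 0$ the restricted map $F_a|_S$ is $\sqrt{d_a}$ times an isometry into $\mathcal K$ and that the subspaces $F_a(S)$ are pairwise orthogonal. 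In particular $r \cdot |S| \leq \dim \mathcal K$, giving the promised bound $r \leq \lfloor |S|^{-1} \dim \mathcal K \rfloor$.

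Next I would assemble the claimed data. Take $R \cong S$ with identification $\ket{i}_R \leftrightarrow \ket{\widetilde{i}}$, let $A = \mathbb{C}^{\lfloor |S|^{-1}\dim\mathcal K\rfloor}$ with orthonormal basis $\{\ket{a}_A\}$, and set $\Gamma_A = \sum_{a \, : \, d_a > 0} d_a \op{a}{a}_A$, which has rank $r$. Define $U : \mathbb{C}^{|S|} \otimes A \to \mathcal K$ on indices with $d_a > 0$ by
\[
U\bigl(\ket{i}_R \otimes \ket{a}_A\bigr) = d_a^{-1/2} F_a \ket{\widetilde{i}},
\]
and complete $U$ to a global isometry by sending the remaining basis vectors (those with $d_a = 0$) to any orthonormal system in the orthogonal complement of $\spn\{F_a\ket{\widetilde{i}}\}$ inside $\mathcal K$. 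The diagonalized Knill-Laflamme relations guarantee the listed image vectors are orthonormal, so $U$ is a bona fide isometry. A direct computation of $\mathcal E(\widetilde{\rho}) = \sum_a F_a \widetilde{\rho} F_a^\dagger$, noting that only indices with $d_a > 0$ contribute nontrivially to the sum on $S$, reproduces the desired identity $\mathcal E(\widetilde{\rho}) = \mathcal U(\rho_R \otimes \Gamma_A)$.

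The main obstacle is mostly bookkeeping rather than substance: one must be careful that the isometric completion of $U$ on the ``unused'' ancilla indices (those with $d_a = 0$) cannot contaminate the final formula, which is exactly what the support of $\Gamma_A$ enforces since $\Gamma_A$ annihilates precisely those indices. Once the Kraus set has been put into its canonical diagonalized form and the ancilla dimension is fixed to $\lfloor |S|^{-1} \dim \mathcal K \rfloor$, every other step is a mechanical consequence of the positivity of $\Lambda$, the orthogonality of the $F_a(S)$, and elementary dimension counting inside $\mathcal K$.
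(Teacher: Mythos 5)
Your proof is correct: the paper itself does not prove this lemma but defers it to \cite{NS06,KS06}, and your argument --- diagonalizing the Knill--Laflamme matrix so that the $F_a|_S$ become scaled isometries with mutually orthogonal ranges, assembling $U$ and $\Gamma_A$ from them, and recovering via $\tr_A \circ \mathcal U^\dagger$ followed by the $R\leftrightarrow S$ identification --- is essentially the standard construction used there and matches the recovery map $\mathcal R = \mathcal V_R^\dagger \circ \tr_A \circ \mathcal U^\dagger$ the paper records after the lemma. The only points worth making explicit are that trace preservation gives $\sum_a d_a = \tr(\Lambda) = 1$ (so $\Gamma_A$ really is a density matrix) and that the ancilla basis vectors carrying $d_a>0$ number $\mathrm{rank}(\Lambda)\leq\lfloor |S|^{-1}\dim\mathcal K\rfloor$, which is exactly the bookkeeping you flagged.
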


We note that while $U$ is an isometry in this result, in practice we can view it as a unitary by enlarging $\mc{K}$ and system $A$ such that $|S|\cdot |A|=\dim\mc{K}$ and then extending the action of $U$ to be unitary on $RA$ (this motivates the name `unitarily' recoverable). Also, from the proof of this result we know that $\mathrm{rank}(\Gamma)$ is equal to the number of syndrome subspaces for the code, which (recall from the last section) is at most the Choi rank of the channel. In the proof of the theorem below we will derive an explicit form and rank bound for the operator $\Gamma$ for replacer codes. (We will reserve use of the term `ancilla' in the theorem specifically for the Hilbert space on which this operator acts.)  
We further note that the mapping from $S$ to $R$ can be made explicit.  For $\overline{R}\cong \mbb{C}^{\lceil |S|^{-1}\dim\mc{H}\rceil}$, define the isometry $V:S\to R\overline{R}$ by
\begin{equation}
\label{Eq:iso-mapping}
V\ket{\widetilde{i}}_S = \ket{i}_R\ket{0}_{\overline{R}} ,
\end{equation}
and so $V^\dagger\ket{i}_R\ket{0}_{\overline{R}} =\ket{\widetilde{i}}_S$. 
Again, we can regard $V$ as a unitary on $\mc{H}$ by enlarging the dimension of $\mc{H}$ so that it is divisible by $|S|$ and then extending the action of $V$ to all of $\mc{H}$ appropriately.  Then whenever Eq.~(\ref{unitrecovlemmaeqn}) is satisfied,
a recovery map is given by $\mathcal R = \mathcal V_{R}^\dagger \circ \tr_A \circ \mathcal U^\dagger$, where $\tr_A$ is partial trace over the second subsystem of $\mathbb{C}^{|S|}\otimes  \mathbb{C}^{\lfloor |S|^{-1}\dim\mathcal K\rfloor}$ and $\mathcal V_R^\dagger(\cdot) = V^\dagger(\cdot\otimes \op{0}{0}_{\overline{R}})V$. 

The converse direction of the result, that correctable implies unitarily recoverable, involves a more technical proof, but as shown in \cite{NS06,KS06} the isometry/unitary $U$ can be explicitly constructed from properties of the channel and code. In Remark~\ref{computecoderemark} we describe how to compute the unitary $U$ in the case of replacer codes, from this result together with details from the proof of the theorem below.

We now state our main result as follows. 

\begin{structurethm}\label{maintheorem} 
Let $n,d\geq 1$ be positive integers and let $E$ be any set of qudits on an $n$-qudit Hilbert space $\mathcal H = \overline{E} E$. Suppose we have a subspace 
$
S = \mathrm{span}\, \{ \ket{\widetilde{i}} \} \subseteq \mathcal H . 
$
Then the following statements are equivalent:  
\begin{enumerate} 
\item[$(i)$]
$S$ is a correctable code for every replacer channel $\mathcal E_E$. 

\item[$(ii)$]
$S$ is a correctable code for some replacer channel $\mathcal E_E$. 

\item[$(iii)$]
For any replacer channel $\mathcal E_E$, there is a system $R = \mathrm{span}\, \{ \ket{i} \} \cong S$, an ancilla system $A$ with $(\dim A)( \dim S) \leq \dim \overline{E}$, and an isometry $U_{\overline{E}}: RA \rightarrow \overline{E}$ and density operator $\sigma_{A E}$ such that for all $\widetilde{\rho} = \sum_{i,j} \lambda_{ij} \kb{\widetilde{i}}{\widetilde{j}}$ supported on $S$, we have  
\begin{equation}\label{unitarilycorrectableequation}  
\mathcal E_E (\widetilde{\rho}) = \big( \mathcal U_{\overline{E}} \otimes \mathrm{id}_E \big) \, ( \rho_R \otimes \sigma_{A E}  ), 
\end{equation} 
where $\rho_R = \sum_{i,j} \lambda_{ij} \kb{i}{j}_R$ and $\mathcal U_{\overline{E}} (\cdot) = U_{\overline{E}} (\cdot)  U_{\overline{E}}^\dagger$. Further, we have 
\[
\sigma_{A E} = \Gamma_{A} \otimes \sigma_E 
\]
for some density operator $\Gamma_{A}$ and $\sigma_E$ is the $E$-output state of $\mathcal E_E$.

\item[$(iv)$]
For any replacer channel $\mathcal E_E$, there is a system $R = \mathrm{span}\, \{ \ket{i} \} \cong S$, an ancilla system $A$ with $(\dim A) (\dim S) \leq \dim \overline{E}$, and an isometry $U_{\overline{E}}: RA \rightarrow \overline{E}$ and state $\ket{\psi}_{A E}$, all determined by $\mathcal E_E$ and $S$, such that for all $\ket{\widetilde{i}}\in S$, we have 
\begin{equation}\label{stateunitaryequation}
  \ket{\widetilde{i}} = \big( U_{\overline{E}} \otimes I_E \big) \, ( \ket{i}_R \otimes \ket{\psi}_{A E}).   
\end{equation}

\item[$(v)$]
Let $\T{Q} = \mathrm{span}\, \{ \ket{i} \} \cong S$ be a reference system and define the state 
\[
\ket{\phi} = \sqrt{\dim S}^{-1} \sum_i \ket{i}_{\T{Q}} \ket{\overline{i}}_{\overline{E}E}
\]
on $\T{Q} \overline{E}E$ and let $\rho = \kb{\phi}{\phi}$. Then we have 
\begin{equation}\label{separableequation}
    \rho^{\T{Q}E} = \rho^{\T{Q}} \otimes \rho^E, 
\end{equation}
where $\rho^{\T{Q}E} = \tr_{\overline{E}}(\rho)$, $\rho^{\T{Q}} = \tr_{\overline{E}E}(\rho)$, and $\rho^{E} = \tr_{\T{Q}\overline{E}}(\rho)$.
\end{enumerate}
\end{structurethm}

\begin{proof} 
$(i) \Leftrightarrow (ii)$: The equivalence of conditions $(i)$ and $(ii)$ follows from Lemma~\ref{jointcorrectable} as one (and hence all) replacer channels $\mathcal E_E$ can correct a code $S$ if and only if the code is correctable for the partial trace $\tr_E$. 

$(ii)\Rightarrow (iii)$: Suppose that $S$ is correctable for a replacer channel $\mathcal E_E = \mathrm{id}_{\overline{E}} \otimes \mathcal D_E$ with $E$-output state $\sigma_E = \mathcal D_E(\rho_E)$. Then by Lemma~\ref{jointcorrectable} it is also correctable for $\tr_E$, and so we may apply the unitarily recoverable result Lemma~\ref{unitrecovlemma} to obtain systems $R \cong S$ and $A$ with $(\dim R)(\dim A) \leq \dim \overline{E}$, and an isometry $U_{\overline{E}}: RA \rightarrow \overline{E}$ and density operator $\Gamma_{A}$ such that for all $\widetilde{\rho}$ supported on $S$, 
\[
\tr_E ( \widetilde{\rho} ) = \mathcal U_{\overline{E}} ( \rho_R \otimes \Gamma_{A} ). 
\]
But recall that the (operator) range space of the map $\mathcal E_E$ is exactly the set of operators of the form $X_{\overline{E}} \otimes \sigma_E$, and so for all $\widetilde{\rho}$ there is a density operator $X_\rho$ on $\overline{E}$ such that $\mathcal E_E( \widetilde{\rho}) =  X_\rho \otimes \sigma_E$. Further, using the fact that $\tr_E \circ \mathcal E_E = \tr_E$, we can find this operator as:
\[
X_\rho = \tr_E (X_\rho \otimes \sigma_E ) = \big( \tr_E \circ \mathcal E_E \big) ( \widetilde{\rho} ) = \tr_E ( \widetilde{\rho} ) = \mathcal U_{\overline{E}} ( \rho_R \otimes \Gamma_{A} ).
\]
It follows that for all $\widetilde{\rho}$, we have 
\begin{eqnarray*}
    \mathcal E_E ( \widetilde{\rho} ) &=& \mathcal U_{\overline{E}} ( \rho_R \otimes \Gamma_{A} ) \otimes \sigma_E  \\ 
    &=& \Big( \mathcal U_{\overline{E}} \otimes \mathrm{id}_E \Big) \Big( \rho_R \otimes \big(  \underbrace{\Gamma_{A}  \otimes \sigma_E}_{\sigma_{A E}} \big) \Big) ,  
\end{eqnarray*}
with the operator $\sigma_{A E}$ as indicated in the last line. This establishes Eq.~(\ref{unitarilycorrectableequation}) and condition $(iii)$.

$(iii)\Rightarrow (iv)$: 
Suppose we have an isometry $U_{\overline{E}}: RA \rightarrow \overline{E}$ and density operator $\sigma_{A E}$ such that Eq.~(\ref{unitarilycorrectableequation}) holds. 
 Consider any state $\ket{\wt{\varphi}}\in S$. Since $\mc{E}_E(\op{\wt{\varphi}}{\wt{\varphi}})=\tr_E(\op{\wt{\varphi}}{\wt{\varphi}})\otimes \sigma_E$ and $\sigma_{AE}=\Gamma_A\otimes \sigma_E$, it follows that Eq. \eqref{unitarilycorrectableequation} implies 
 \[\tr_E(\op{\wt{\varphi}}{\wt{\varphi}})=\tr_{A'}[\mc{U}_{\overline{E}}(\op{\varphi}{\varphi}\otimes\op{\Gamma}{\Gamma}_{AA'})],\]
 where $\ket{\Gamma}_{AA'}$ is a purification of $\Gamma_A$.  Then by Uhlmann's theorem (which can also be viewed as a special case of the Stinespring dilation theorem), there exists an isometry $W^{(\varphi)}_{A'\to E}$ such that
 \[\ket{\wt{\varphi}}_{\overline{E}E}=U_{\overline{E}}\ket{\varphi}_R\otimes W^{(\varphi)}_{A'\to E}\ket{\Gamma}_{AA'}.\]
 We claim that $W^{(\varphi)}_{A'\to E}\ket{\Gamma}_{AA'}$ is independent of $\ket{\varphi}$.  To see this, consider any two non-orthogonal states $\ket{\varphi}$ and $\ket{\varphi'}$.  Using the previous equation to take the inner product of $\ket{\wt{\varphi}}$ and $\ket{\wt{\varphi}'}$, we find the implication 
 \[\ip{\wt{\varphi}}{\wt{\varphi}'}=\ip{\varphi}{\varphi'}\bra{\Gamma}W^{(\varphi)\dagger}W^{(\varphi')}\ket{\Gamma}\quad\Rightarrow\quad 1=\bra{\Gamma}W^{(\varphi)\dagger}W^{(\varphi')}\ket{\Gamma},\]
 which means that $W^{(\varphi)}\ket{\Gamma}=W^{(\varphi')}\ket{\Gamma}$.  But since $\ket{\varphi}$ and $\ket{\varphi'}$ are arbitrary non-orthogonal states, it follows that $\ket{\psi}_{AE}:=W^{(\varphi)}_{A'\to E}\ket{\Gamma}_{AA'}$ must be the same vector for all $\ket{\varphi}$.  Thus Eq.~(\ref{stateunitaryequation}) holds and this establishes condition $(iv)$.

$(iv)\Rightarrow (iii)$: If condition $(iv)$ holds for $\mathcal E_E$, then Eq.~(\ref{unitarilycorrectableequation}) of condition $(iii)$ holds for $\mathcal E_E$ by direct application of Eq.~(\ref{stateunitaryequation}), with the map  $\mathcal U_{\overline{E}} \otimes \mathrm{id}_E$ factoring through as, recall, $\mathcal E_E = \mathrm{id}_{\overline{E}} \otimes \mathcal D_{E}$. The decomposition of $\sigma_{A E}$ follows from the fact that the range of the map $\mathcal E_E$ consists of operators of the form $X_{\overline{E}} \otimes \sigma_E$. Explicitly, we can calculate as follows: 
\begin{eqnarray*}
\mathcal E_E ( \kb{\widetilde{i}}{\widetilde{j}} ) &=& \mathcal E_E \big(  ( \mathcal U_{\overline{E}} \otimes \mathrm{id}_E ) ( \op{i}{j}_R\otimes\op{\psi}{\psi}_{AE}  )  \big) \\ 
    &=& ( \mathcal U_{\overline{E}} \otimes \mathrm{id}_E ) (\id_{RA} \otimes\mc{D}_E)(\op{i}{j}_R\otimes\op{\psi}{\psi}_{AE}) \\ 
    &=& ( \mathcal U_{\overline{E}} \otimes \mathrm{id}_E ) (\op{i}{j}_R\otimes ( \underbrace{(\id_{A} \otimes\mc{D}_E )(\op{\psi}{\psi}_{AE})}_{\Gamma_A \otimes \sigma_E}) , 
\end{eqnarray*}
and so in particular, we have $\sigma_{AE} = \Gamma_A \otimes \sigma_E$ with 
\[
\Gamma_A=\tr_E \big( ( \mathrm{id}_A \otimes \mathcal D_E )( \op{\psi}{\psi}_{AE} ) \big), 
\]
where we note here the partial trace of $E$ is implemented over the $AE$ composite system.  

$(iii)\Rightarrow (i)$: Condition $(iii)$ is a special case of unitarily recoverable codes and, as discussed prior to the theorem, this implies that $S$ is correctable for $\mathcal E_E$ and condition $(i)$ holds. Explicitly, when $(iii)$ is satisfied, a recovery operation can be defined as 
\[
\mathcal R =  \mathcal V_R^\dagger \circ \tr_{A} \circ \mathcal U_{\overline{E}}^\dagger \circ \tr_E, 
\]
where $\mc{V}_R^\dagger(\cdot)=V^\dagger(\cdot\otimes \op{0}{0}_{\overline{R}})V$, with $V:S\to R\overline{R}$ being the isometry defined in Eq. \eqref{Eq:iso-mapping} (extended to a unitary if needed).

$(iv)\Rightarrow (v)$: An application of condition $(iv)$ is that it can be used to show that $(v)$ holds through a direct calculation (with the systems $R$ and $\T{Q}$ of the two conditions identified via an isometry). Indeed, if Eq.~(\ref{stateunitaryequation}) holds for some choice of replacer channel $\mathcal E_E$, then for all $i$ we have 
\begin{eqnarray*}
\tr_{\overline{E}} ( \kb{\widetilde{i}}{\widetilde{i}}_{\overline{E}E} ) &=& 
    \tr_{\overline{E}} \big( (U_{\overline{E}} \otimes I_E ) \big( \kb{i}{i}_R \otimes \kb{\psi}{\psi}_{A E} \big)  (U_{\overline{E}}^\dagger  \otimes I_E ) \big) \\ 
      &=& \tr_{A}  \big( \kb{\psi}{\psi}_{A E} \big), 
\end{eqnarray*}
where we have used the fact that $U_{\overline{E}} : R A \rightarrow \overline{E}$. 
Hence we find for $\rho = \kb{\phi}{\phi}$, 
\begin{eqnarray*}
\rho^{\T{Q}E} = \tr_{\overline{E}}(\rho) &=& \frac{1}{\dim S} \sum_{i,j} \kb{i}{j}_{\T{Q}} \otimes \tr_{\overline{E}} ( \kb{\widetilde{i}}{\widetilde{j}}_{\overline{E}E} ) \\ 
&=& \frac{1}{\dim S} \sum_{i} \kb{i}{i}_{\T{Q}} \otimes \tr_{\overline{E}} ( \kb{\widetilde{i}}{\widetilde{i}}_{\overline{E}E} ) \\
&=& \Big( (\dim S)^{-1} I_{{\T{Q}}} \Big) \otimes \Big(  \tr_{A} \big( \kb{\psi}{\psi}_{AE} \big) \Big) .  
\end{eqnarray*}
On the other hand, we have 
\[
\rho^{\T{Q}} = \tr_{\overline{E}E}(\rho) = \frac{1}{\dim S} \sum_{i,j} \kb{i}{j}_{\T{Q}} \, \tr \big(  \kb{\widetilde{i}}{\widetilde{j}}_{\overline{E}E}   \big) = \frac{1}{\dim S} I_{\T{Q}}, 
\]
and 
\[
\rho^{E} = \tr_{\T{Q}\overline{E}}(\rho) = \frac{1}{\dim S} \sum_i \tr_{\overline{E}} \big( \kb{\widetilde{i}}{\widetilde{i}}_{\overline{E}E} \big) = \tr_{A}  \big( \kb{\psi}{\psi}_{AE} \big). 
\]
It follows that $\rho^{\T{Q}E} = \rho^{\T{Q}} \otimes \rho^E$ and condition $(v)$ holds. 

$(v)\Rightarrow (i)$: Finally, to complete the proof we show that Eq.~(\ref{separableequation}) implies that $S$ is correctable for any $E$-replacer channel. To do so, we will use the complementarity result of Lemma~\ref{complemma}. We can start by using the calculations of the previous paragraph (those that do not depend on condition $(iv)$) to show that for all $k$, with $P_{k} = \kb{k}{k}_{\T{Q}} \otimes I_E$, 
\begin{eqnarray*} 
P_{k} (\rho^{\T{Q}E}) P_{k} &=& P_{k} \Big( \frac{1}{\dim S} \sum_{i,j} \kb{i}{j}_{\T{Q}} \otimes \tr_{\overline{E}} ( \kb{\widetilde{i}}{\widetilde{j}}_{\overline{E}E} )   \Big) P_{k} \\
&=&  \kb{k}{k}_{\T{Q}} \otimes\Big( \frac{1}{\dim S} \tr_{\overline{E}} ( \kb{\widetilde{k}}{\widetilde{k}}_{\overline{E}E} ) \Big)  , 
\end{eqnarray*} 
and 
\begin{eqnarray*}
P_k ( \rho^{\T{Q}} \otimes \rho^E ) P_k &=& \Big( \frac{1}{\dim S} \kb{k}{k}_{\T{Q}} \Big) \otimes \Big( \frac{1}{\dim S} \sum_i \tr_{\overline{E}} \big( \kb{\widetilde{i}}{\widetilde{i}}_{\overline{E}E} \big)  \Big) \\ &=&   
\kb{k}{k}_{\T{Q}} \otimes \Big( \frac{1}{\dim S} \tr_{\overline{E}} \big(\frac{1}{\dim S} P_S \big) \Big), 
\end{eqnarray*}
where $P_S$ is the projection of $\mathcal H$ onto $S$. Thus when Eq.~(\ref{separableequation}) holds, for all $i$ we have  
\[
\tr_{\overline{E}} ( \kb{\widetilde{i}}{\widetilde{i}}_{\overline{E}E} ) = \tr_{\overline{E}} \big(\frac{1}{\dim S} P_S \big). 
\]
It follows that for every density operator $\widetilde{\rho} = \sum_{i,j} p_{ij} \kb{\widetilde{i}}{\widetilde{j}}_{\overline{E}E}$ supported on $S$, we have 
\[
\tr_{\overline{E}} ( \widetilde{\rho}  ) = \sum_{i,j} p_{ij} \tr_{\overline{E}} \big( \kb{\widetilde{i}}{\widetilde{j}}_{\overline{E}E} )  \big) = \sum_{i} p_{ii} \tr_{\overline{E}} \big( \kb{\widetilde{i}}{\widetilde{i}}_{\overline{E}E} )  \big) = \tr_{\overline{E}} \big(\frac{1}{\dim S} P_S \big). 
\]
Hence $S$ is a private subspace for the partial trace map $\tr_{\overline{E}}$. By Lemma~\ref{complemma}, it follows that $S$ is correctable for the complementary trace map $\tr_{E}$, which in turn is equivalent to being correctable for any $E$-replacer channel by Lemma~\ref{jointcorrectable}, and so $(i)$ holds. This completes the proof. 
\end{proof}

Before moving on to the next section, some remarks are in order on the conditions in the theorem statement and the content of the proof.

\begin{remark}
See Figure~1 for a diagrammatic circuit perspective of the conclusions of the theorem in the general case. We note that many of the naturally arising examples of replacer codes (in particular those that arise in secret sharing schemes) are codes that can be directly identified with a subsystem of the system Hilbert space; i.e., the code dimension divides the overall Hilbert space dimension. As in most of the examples presented below, in such cases we can identify $S \cong \overline{E}_1$ where  $\overline{E} = \overline{E}_1 \overline{E}_2$ (following the notation of \cite{AXH15} for instance), and $\overline{E}_2$ is potentially some other subsystem that does not get erased. Recall that we always have $\dim S \leq \dim \overline{E}$, so this is possible precisely when the code dimension divides the dimension of $\mathcal H = \overline{E} E$ (and of $\overline{E}$). However, unlike the case in previous examples, our theorem identifies subsystems $R, A$ that are not necessarily equivalent to the given qudit subsystem decomposition. (This point is elucidated further in Example~\ref{ex:grassl-four-qubit}.) 
\end{remark}

\begin{figure}[bt!]
    \centering
    \includegraphics[width=0.95\linewidth]{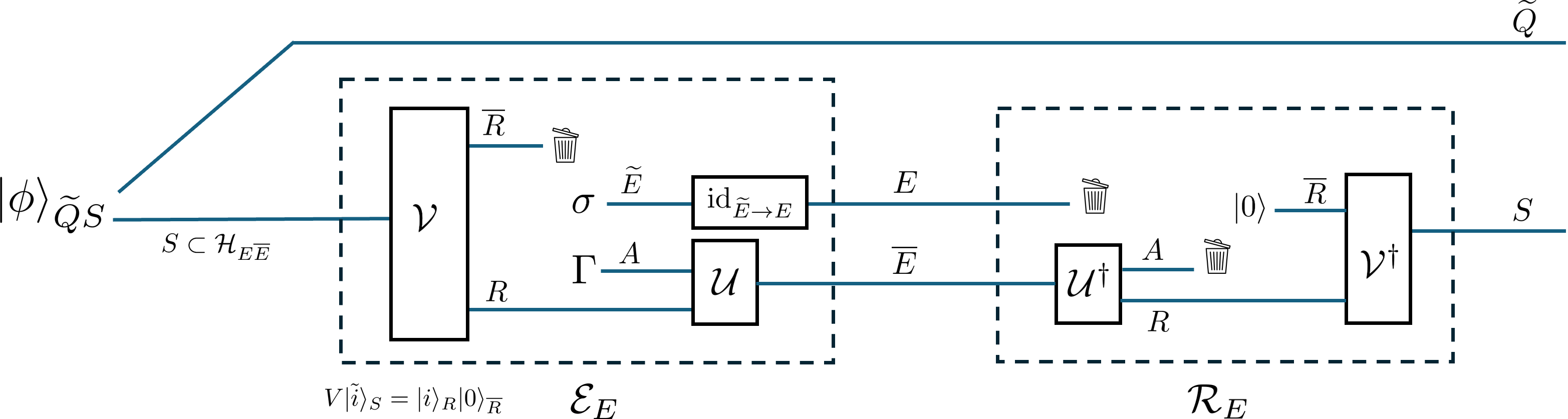}
    \caption{\textbf{Conditions for correctability of a replacer on part of an encoded state}: Circuit diagram of the structure theorem of a $|S|$-dimensional quantum replacer code for joint systems $E\overline{E}$.  By extending systems as described in the text following Lemma \ref{unitrecovlemma}, the maps $\mc{U}$ and $\mc{V}$ apply unitary transformations $U$ and $V$, respectively.}
    \label{fig:maintheorem}
\end{figure}

\begin{remark}
Let us expand on the information-theoretic viewpoint of the theorem. In particular, note that condition $(v)$ is equivalent to the mutual information $I(\T{Q}:E)_{\ket{\phi}}$ of the states vanishing; that is, 
\begin{equation}\label{mutualinformationequation}
    I(\T{Q}:E)_{\ket{\phi}} = H(\T{Q})_{\ket{\phi}} + H(E)_{\ket{\phi}} - H(\T{Q}E)_{\ket{\phi}} = 0, 
\end{equation}
where $S(\T{Q})_{\ket{\phi}} = - \tr (\rho^{\T{Q}} \log (\rho^{\T{Q}}) ),$ etc., is the entropy of the reduced state $\rho^{\T{Q}}=\tr_{E\overline{E}}\op{\phi}{\phi}$. This follows from well-known properties of quantum relative entropy.  

Further, from Figure~\ref{fig:maintheorem}, we see that the full channel $\mc{E}_E$ can be expressed as
\begin{align}
    \mc{E}_E(\rho_{E\overline{E}})=\tr_{R'A'E'}[\id_{\wt{E}\to E}\circ\mc{U}\circ\mc{V}(\rho_{E\overline{E}}\otimes\op{\Gamma}{\Gamma}_{AA'}\otimes\op{\sigma}{\sigma}_{\wt{E}E'})]
\end{align}
where $V:E\overline{E}\to R\overline{R}$ and $U:RA\to\overline{E}$ are unitaries, while $\ket{\Gamma}_{AA'}$ and $\ket{\sigma}_{\wt{E}E'}$ are purifications of $\Gamma_A$ and $\sigma_{\wt{E}}$, respectively (the purifying systems are omitted in Fig.~\ref{fig:maintheorem}).  This is a standard `Stinespring representation' of the channel in which the unitary $\mc{U}\circ\mc{V}$ is being applied on the larger system consisting of $E\overline{E}$ and the auxiliary systems $AA'\wt{E}E'$.  The map $\id_{\wt{E}\to E}$ simply transfers the prepared state $\sigma_{\wt{E}}$ from register $\wt{E}$ to the system register $E$. 
Notice that when $\rho_{E\overline{E}}$ has support on $S$, we have $\mc{V}(\rho_{E\overline{E}})=\rho_R\otimes\op{0}{0}_{\overline{R}}$ from Eq.~\eqref{Eq:iso-mapping}.  Thus, the final state in the purified picture is given by 
\[\ket{\Psi}=UV\ket{\Phi^+}_{S\widetilde{Q}}\ket{\Gamma}_{AA'}\ket{\sigma}_{\wt{E}E}=U\ket{\Phi^+}_{R\widetilde{Q}}\ket{\Gamma}_{AA'}\ket{\sigma}_{EE'}\ket{0}_{\overline{R}},\]
and the entropy of systems $E\overline{E}$ is
\begin{align}
H(E\overline{E})_{\ket{\Psi}}&=H(S)_{\ket{\Phi^+}}+H(A)_{\ket{\Gamma}}+H(E)_{\ket{\sigma}}=H(S)_{\ket{\Phi^+}}+H(A')_{\ket{\Gamma}}+H(E')_{\ket{\sigma}}.
\end{align}
This says that the final entropy of system $E\overline{E}$ is equal to the initial entropy of system $S$ plus the final entropy of the auxiliary systems.  As shown by Schumacher and Nielsen \cite{SN96}, this condition is necessary and sufficient for $S$ to be correctable for $\mc{E}_E$.

\end{remark}

\begin{remark}\label{partialreplacernote} 
While the theorem is stated for replacer channels that are endomorphic on the space $\mc{H} = \overline{E}E$ (as in Definition~\ref{def:q-replacer-channels}), we can easily generalize the result to replacer channels built with a private channel $\mc{D}_E$ whose output space is different than $E$.  This includes the case of so-called `flagged' erasure channels, that replace the input with a flag state that is orthogonal to the support space of the input.  

In fact, we can also generalize the theorem to include channels that erase the input with only some probability $\lambda\in[0,1)$.  We define a \textit{partial replacer channel on $E$} as any channel of the form
\begin{equation}
    \mc{E}_{\lambda,\sigma}(\rho_{E\overline{E}})=\lambda \, \rho_{E\overline{E}}+(1-\lambda)\, \sigma_{E'}\otimes\tr_E (\rho_{E\overline{E}}),
\end{equation}
where $\sigma_{E'}$ is an arbitrary (and fixed) state of some system $E'\supseteq E$. 
    \begin{lem}
    A code is correctable for any partial replacer channel if and only if it is correctable for the completely depolarizing channel on $E$.
\end{lem}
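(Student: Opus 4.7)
The plan is to split into the two implications; only the reverse direction ($\Leftarrow$) requires real work.

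For ($\Rightarrow$), the completely depolarizing channel $\mc{E}_{E,CD}$ is itself a partial replacer channel---take $\lambda=0$, $E'=E$, and $\sigma_{E'}=(\dim E)^{-1}I_E$---so correctability for every partial replacer channel immediately entails correctability for $\mc{E}_{E,CD}$.

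For ($\Leftarrow$), the strategy is to reduce any partial replacer channel to the partial trace $\tr_E$ by tracing out its output register. Starting from correctability of $S$ for $\mc{E}_{E,CD}$, Lemma~\ref{jointcorrectable} supplies a recovery channel $\mc{R}_0:\mc{L}(\overline{E})\to\mc{L}(\mc{H})$ with $\mc{R}_0\circ\tr_E\circ\mc{P}_S=\mc{P}_S$. The key identity is
\[
\tr_{E'}\circ\mc{E}_{\lambda,\sigma}=\tr_E,
\]
which follows by a short direct calculation: tracing $E'$ out of $\sigma_{E'}\otimes\tr_E(\rho)$ yields $\tr(\sigma_{E'})\,\tr_E(\rho)=\tr_E(\rho)$, tracing $E'$ out of the embedded copy of $\rho_{E\overline{E}}$ collapses to $\tr_E(\rho)$, and the two convex weights sum to $\lambda+(1-\lambda)=1$. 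Setting $\mc{R}:=\mc{R}_0\circ\tr_{E'}$ then gives $\mc{R}\circ\mc{E}_{\lambda,\sigma}\circ\mc{P}_S=\mc{R}_0\circ\tr_E\circ\mc{P}_S=\mc{P}_S$, which is correctability for the given partial replacer channel.

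The one piece of bookkeeping worth flagging---the main ``obstacle,'' though a mild one---is that the first summand $\lambda\rho_{E\overline{E}}$ lives a priori on $\overline{E}\otimes E$ rather than on the output space $\overline{E}\otimes E'$, so the definition of the partial replacer implicitly uses an isometric embedding $E\hookrightarrow E'$ relative to which $\tr_{E'}$ of that summand agrees with $\tr_E$ of the input. Once this convention is pinned down, the argument above is a direct composition and requires no machinery beyond Lemma~\ref{jointcorrectable}; in particular, no appeal to the full structure theorem is needed.
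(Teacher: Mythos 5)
Your $(\Leftarrow)$ direction is correct and is genuinely more elementary than the paper's: the paper derives it from the codeword form in condition $(iv)$ of the structure theorem together with the Knill--Laflamme conditions, whereas you use the degradation identity $\tr_{E'}\circ\mc{E}_{\lambda,\sigma}=\tr_E$ and compose with the recovery map for $\tr_E$ supplied by Lemma~\ref{jointcorrectable}. Your remark about the embedding $E\hookrightarrow E'$ is exactly the right bookkeeping, and the same composition in fact also recovers the paper's subsequent corollary, since $\tr_{E'}\circ(\mathrm{id}_{\overline{E}}\otimes\mc{N}_{E\to E'})=\tr_E$ for an arbitrary channel $\mc{N}$ acting on $E$.

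The gap is in the $(\Rightarrow)$ direction. You read ``correctable for any partial replacer channel'' as ``correctable for every partial replacer channel,'' which trivializes that direction because $\mc{E}_{E,CD}$ is the $\lambda=0$, $\sigma_{E'}=|E|^{-1}I_E$ instance. The paper, however, proves the stronger per-channel statement, which is the substantive content of the lemma (and what makes the ``most destructive'' corollary meaningful): if $S$ is correctable for a \emph{single} partial replacer channel $\mc{E}_{\lambda,\sigma}$, with arbitrary fixed $\lambda\in[0,1)$ and arbitrary $\sigma_{E'}$ --- so possibly a channel that leaves the input untouched with probability $\lambda$ --- then $S$ is already correctable for the completely depolarizing channel on $E$. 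Your composition identity cannot be run in this direction: a recovery map $\mc{R}$ for $\mc{E}_{\lambda,\sigma}$ need not factor through $\tr_{E'}$, and discarding $E'$ destroys precisely the branch that survived with probability $\lambda$, so correctability for $\mc{E}_{\lambda,\sigma}$ cannot be converted into correctability for $\tr_E$ by any post-processing argument. The paper closes this direction with the Knill--Laflamme conditions: writing Kraus operators $\{\sqrt{\lambda}\,I_{\overline{E}E}\}\cup\{I_{\overline{E}}\otimes\sqrt{(1-\lambda)/|E|}\,\sqrt{\sigma}X(l)Z(k)\}$ for $\mc{E}_{\lambda,\sigma}$, correctability forces $\bra{\wt{i}}M_a^\dagger M_b\ket{\wt{j}}=c_{ab}\delta_{ij}$, and the erasure-type subfamily then satisfies (after rescaling the constants by $|E|/(1-\lambda)$) the Knill--Laflamme conditions for the full replacer $\mc{E}_{0,\sigma}$, whence the structure theorem yields correctability for $\mc{E}_{E,CD}$. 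Some argument of this kind is needed to complete your forward direction under the intended reading.
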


\begin{proof}
    Suppose a code $S$ is correctable for a partial replacer channel $\mc{E}_{\lambda,\sigma}$.
    Let 
    \[
    Z(k)=\sum_{j=0}^{|E|-1}\omega^{jk}\op{j}{j} \quad \mathrm{and} \quad  X(l)=\sum_{j=0}^{|E|-1}\op{j\oplus l}{l}
    \]
    be the generalized Pauli operators on system $E$, where $\omega=e^{2\pi i/|E|}$ and $\oplus$ denotes addition modulo $|E|$.  Then a set of Kraus operators for $\mc{E}_{\lambda,\sigma}$ are given by 
    \[
    \big\{\sqrt{\lambda} I_{\overline{E}E}, \, I_{\overline{E}}\otimes\sqrt{(1-\lambda)/|E|}\sqrt{\sigma}X(l)Z(k)\big\}_{l,k=0}^{|E|-1}.  
    \]
    If $\ket{\widetilde{i}}$ denotes orthonormal basis vectors for $S$, then by the Knill-Laflamme conditions, correctability of $S$ implies that these Kraus operators satisfy $\bra{\wt{i}}M^\dagger_aM_b\ket{\wt{j}}= c_{ab}\delta_{ij}$
    for some constants $c_{ab}$.  In particular, we have
    \begin{align}
        \frac{1-\lambda}{|E|}\bra{\wt{i}}\mbb{I}_{\overline{E}}\otimes Z(k)^\dagger X(l)^\dagger \sigma X(l)Z(k)\ket{\wt{j}}=c_{lk}\delta_{ij} \qquad \forall l,k=0,\cdots,|E|-1.
    \end{align}
By defining $c_{lk}'=\frac{|E|}{1-\lambda}c_{lk}$, we see that the Knill-Laflamme conditions for the completely replacer channel $\mc{E}_{0,\sigma}$ are satisfied with the structure constants $c'_{lk}$.  Hence, $S$ is correctable for $\mc{E}_{0,\sigma}$ and also for the completely depolarizing channel by the structure theorem.

For the converse, suppose that a code $S$ is correctable for the completely depolarizing channel on $E$.  Then by the theorem, an orthonormal basis for $S$ is given by  $\ket{\wt{i}}_{\overline{E}E}=U_{RA\to\overline{E}}\ket{i}\ket{\psi}_{AE}$ for some isometry $U_{RA\to \overline{E}}$.  But if $\{I_{\overline{E}}\otimes M_a\}_a$ is \textit{any} set of error operators with $M_a:E\to E'$, we then have
\begin{align}
    \bra{\wt{i}}M_a^\dagger M_b\ket{\wt{j}}=\delta_{ij}\bra{\psi} I_A\otimes M_a^\dagger M_b\ket{\psi}=\delta_{ij}c_{ab}.
\end{align}
Therefore, $S$ is correctable for any channel having Kraus operators of the form $\{I_{\overline{E}}\otimes M_a\}_a$, which includes all $E$-replacer channels.
\end{proof}

It is perhaps useful to point out the following conceptually obvious fact that follows directly from the proof of this result. 

\begin{cor}
    Completely depolarizing (or replacer) channels are the most destructive in the sense that being able to correct for complete depolarization (or replacer) on some subsystem enables the ability to correct for any other noise on that subsystem.
\end{cor}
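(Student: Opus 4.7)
The plan is to observe that this corollary is essentially a direct harvesting of the converse computation in the preceding lemma, which proved more than was formally needed. Starting from the structural form $\ket{\widetilde{i}}_{\overline{E}E} = (U_{\overline{E}}\otimes I_E)(\ket{i}_R\otimes\ket{\psi}_{AE})$ guaranteed by the Structure Theorem, the Knill-Laflamme orthogonality $\bra{\widetilde{i}}(I_{\overline{E}} \otimes M_a^\dagger M_b)\ket{\widetilde{j}} = \delta_{ij} c_{ab}$ was verified for an \emph{arbitrary} family of operators $\{M_a\}$ acting on $E$, with no use made of the specific form of the depolarizing Kraus operators. So the corollary will follow by spelling out exactly which piece of that argument is being reused.

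First I would make precise what ``noise on the subsystem $E$'' means: any channel of the form $\mathrm{id}_{\overline{E}} \otimes \mathcal{N}_E$ for some channel $\mathcal{N}_E : \mathcal{L}(E) \to \mathcal{L}(E')$. Every such channel admits a Kraus decomposition consisting entirely of operators $\{I_{\overline{E}} \otimes M_a\}_a$, where the $\{M_a\}_a$ are Kraus operators for $\mathcal{N}_E$. So it suffices to check the Knill-Laflamme conditions for an arbitrary such family against the code $S$.

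Next I would apply the Structure Theorem to the hypothesis: correctability of $S$ for the completely depolarizing channel on $E$ forces each basis state of $S$ into the form $\ket{\widetilde{i}} = (U_{\overline{E}}\otimes I_E)(\ket{i}_R \otimes \ket{\psi}_{AE})$ for a common isometry $U_{\overline{E}}: RA\to\overline{E}$ and state $\ket{\psi}_{AE}$ (condition $(iv)$). A short calculation using $U_{\overline{E}}^\dagger U_{\overline{E}} = I_{RA}$ yields
\[
\bra{\widetilde{i}}(I_{\overline{E}} \otimes M_a^\dagger M_b)\ket{\widetilde{j}} = \ip{i}{j}_R \bra{\psi}(I_A \otimes M_a^\dagger M_b)\ket{\psi} = \delta_{ij}\, c_{ab},
\]
which is exactly the Knill-Laflamme correctability condition for the error set $\{I_{\overline{E}} \otimes M_a\}_a$ on the code $S$.

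There is no real obstacle here, as the corollary is almost a tautological consequence of the preceding lemma's proof. The only care needed is in noting which portion of that argument uses only the structural form of the basis states of $S$ rather than any specifics of the depolarizing channel, and in confirming that this structural form is supplied by condition $(iv)$ of the Structure Theorem once correctability for complete depolarization of $E$ is assumed.
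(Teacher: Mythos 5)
Your proposal is correct and matches the paper's argument: the corollary is obtained exactly as you describe, by reusing the converse computation in the preceding lemma, where condition $(iv)$ of the Structure Theorem gives $\ket{\widetilde{i}} = (U_{\overline{E}}\otimes I_E)(\ket{i}_R\otimes\ket{\psi}_{AE})$ and the Knill--Laflamme condition $\bra{\widetilde{i}}M_a^\dagger M_b\ket{\widetilde{j}}=\delta_{ij}c_{ab}$ is verified for an arbitrary family $\{I_{\overline{E}}\otimes M_a\}_a$ of error operators supported on $E$. Your only addition, spelling out that any noise on the subsystem means a channel $\mathrm{id}_{\overline{E}}\otimes\mathcal{N}_E$ with Kraus operators of that form, is a faithful elaboration of what the paper leaves implicit.
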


\end{remark}

\begin{remark}\label{computecoderemark}
Finally, as a lead-in to the applications and examples of the next section, we describe how the structure theorem, together with its preceding lemmas and proof details, provides a way to construct the unitary operators $U_{\overline{E}}$ and states $\Gamma_A$ and $\ket{\psi}_{A E}$ that implement these codes. The key point is that we can use the unitarily recoverable description of correctable codes for the partial trace operations to do this.  

Suppose $S \subseteq \mathcal H = \overline{E}E$ is a correctable code for a replacer channel $\mathcal{E}_E$ on $\mathcal H$. Then, by Lemma~\ref{jointcorrectable}, it is correctable for the partial trace operation $\tr_E$ that traces out system $E$. 
In particular, from Lemma~\ref{unitrecovlemma} we have for all $\widetilde{\rho}$ supported on $S$, 
\begin{equation}\label{unitaryconstruct1}
\tr_E (\widetilde{\rho}) = \mathcal{U}_{\overline{E}} (\rho_R \otimes \Gamma_A), 
\end{equation}
for some state $\rho_R$ determined by $\widetilde{\rho}$ on a reference system $R$ isometric to $S$ and fixed ancilla state $\Gamma_A$ with its rank constraint given in the lemma. 

Before finding a unitary, we can start by determining a state $\Gamma_A$ (and also $\dim A$) by simply evaluating the left-hand side of Eq.~(\ref{unitaryconstruct1}) for $\widetilde{\rho}$ equal to any rank-one projection on $S$. As $\rho_R$ is also rank-one and $U_{\overline{E}}$ is an isometry, the rank of $\tr_E (\widetilde{\rho})$ is equal to the rank of $\Gamma_A$, which we can define as a full-rank operator and hence this rank is also equal to $\dim A$. More than this, we can explicitly find a suitable operator $\Gamma_A$ by computing the spectral decomposition of $\tr_E (\widetilde{\rho})$ and then using its non-zero eigenvalues $\{ \gamma_k \}_k$ (including multiplicities) together with any orthonormal basis $\{\ket{k}_A \}_k$ for $A$ to define $\Gamma_A = \sum_k \gamma_k \kb{k}{k}_A$.

We can then obtain $U_{\overline{E}}$ by evaluating Eq.~(\ref{unitaryconstruct1}) with $\widetilde{\rho} = \kb{\widetilde{i}}{\widetilde{j}}$ for any orthonormal basis $\{\ket{\widetilde{i}}\}_i$ for the code space $S$. 
In summary, we do the following: 
\begin{enumerate}
    \item Choose an orthonormal basis $\{\ket{j}_R\}_{j\in \mathcal{J}}$ for the reference system $R$.
    \item Choose an orthonormal basis $\{\ket{k}_A\}_k$ for the ancilla system $A$ and define a density operator $\Gamma_A = \sum_k \gamma_k \kb{k}{k}_A$ as discussed above.  
    \item The isometry $U_{\overline{E}}$ is then obtained by evaluating Eq.~(\ref{unitaryconstruct1}) with $\widetilde{\rho}$ equal to the rank-one operators determined by any orthonormal basis for $S$.  
\end{enumerate}
We will see this unitary and code construction explicitly in the examples of the next section.

\end{remark}

\section{Gallery of Applications and Examples}\label{sec:gallery}

In this section, we present a number of applications and examples for specific classes of replacer codes, some of which are drawn from different investigations found in the literature. We indicate how they all can be viewed from the overarching perspective of the structure theorem, by indicating the unitary operator and states given by the conditions of the theorem. We also show the utility of the result by exhibiting how new codes and results, and alternate proofs of certain results, can be derived from it. 

We have divided the section into subsections to help organize the presentation, but the classes of codes described in each subsection are not mutually exclusive. In addition, to further streamline the presentation, we have left out the normalization constants when defining and discussing the code basis states in each of the examples.

\subsection{Trivial Replacer Codes}
We begin by describing a simple class of codes, those defined by subspaces of the untouched qudits together with a fixed state on the replaced qudits. 

\begin{example}
Given any replacer channel $\mathcal E_E = \mathrm{id}_{\overline{E}} \otimes \mathcal D_E$ on $\mathcal H = \overline{E} E$, with $\mathcal D_E(\rho_E) = \sigma_E$ for all $\rho_E$, one always has its {\it trivial correctable codes} given as follows. Fix a state $\ket{\psi}_E \in E$ and a subspace $S^\prime$ of $\overline{E}$, and define the subspace $S = S^\prime \otimes \ket{\psi}_E$ of $\mathcal H$. If $\mathcal R = \mathrm{id}_{\overline{E}} \otimes \mathcal F_E$ where $\mathcal F_E$ is any channel on $E$ with $\mathcal F_E(\sigma_E) = \kb{\psi}{\psi}_E$, then $(\mathcal R \circ \mathcal E_E)(\widetilde{\rho}) = \widetilde{\rho}$ for all $\widetilde{\rho}$ supported on $S$, and so $S$ is a correctable code for $\mathcal E_E$. In the context of the structure theorem, note that here we can take $A=\mathbb C$ and a unitary $U_{\overline{E}}: R \rightarrow S^\prime \subseteq \overline{E}$, to obtain code basis states as $\ket{\widetilde{i}} = ( U_{\overline{E}} \otimes I_E ) \, ( \ket{i}_R \otimes \ket{\psi}_{E})$.

In fact, in light of the theorem, the trivial codes can be seen to correspond exactly to the cases in which the state $\ket{\psi}_{AE}$ is separable across the combined $AE$ system. Indeed, if $S$ is a correctable code for $\mathcal E_E$ and $\ket{\psi}_{AE} = \ket{\phi}_A \ket{\phi}_E$ (where we allow the $A$ state to simply be the number $1$ when $A=\mathbb C$), then a basis for $S$ is given by 
\[
\ket{\widetilde{i}} = \big( \underbrace{U_{\overline{E}} \,  \ket{i}_R \ket{\phi}_A}_{\in \overline{E}} \big) \, \ket{\phi}_{E}.
\]
Hence, $S = S^\prime \otimes \ket{\phi}_E$ is of the stated form with $S^\prime = U_{\overline{E}} (R \otimes \ket{\phi}_A)$. 
\end{example}

The trivial codes for specific replacer channels are thus easy to obtain and characterize. Of course, they are not of much use for applications, as, for instance, such a code is in general not correctable for any replacer channel that disrupts the subsystems that are left unchanged by the original channel.

\subsection{Quantum Erasure Codes}

Next we revisit one of the first non-trivial analyses of quantum erasure codes, with an example and then an alternate proof of a no-go result based on the theorem. 

\begin{example}\label{ex:grassl-four-qubit}
Quantum erasure codes were investigated by Grassl, et al, in \cite{grassl1997codes}. In our terminology, recall these are the replacer codes for which the private map is the completely depolarizing channel. We first show how a motivating example from \cite{grassl1997codes} can be viewed from the structure theorem perspective. Consider the two-qubit code $S$ on four-qubit space with (unnormalized) basis states given by: 
    \begin{align}
        \ket{\tilde{0}} = \ket{0000}+\ket{1111} &  \quad \quad      \ket{\tilde{2}} = \ket{1100}+\ket{0011} \notag \\
       \ket{\tilde{1}} = \ket{1001}+\ket{0110}  & \quad \quad  \ket{\tilde{3}} = \ket{1010}+\ket{0101} \notag \\
       \notag
    \end{align}  
    
This code is correctable for each of the four single qubit erasures. Let us observe the code from the perspective of the theorem. In the case of erasure of the fourth qubit, for instance, we have $\mathcal H = \overline{E} E = ( (\mathbb C^2)^{\otimes 3} ) \otimes \mathbb C^2$, with the reference system $R = \mathrm{span} \, \{ \ket{0}_R, \ket{1}_R, \ket{2}_R, \ket{3}_R  \} \cong S$. We can also immediately say the ancilla $A$ is either $\mathbb C$ or $\mathbb C^2$ as $4 (\dim A) = (\dim R) (\dim A) \leq \dim \overline{E} =8$. As in the proof of the theorem (and Remark~\ref{computecoderemark}), we can discern $\Gamma_A$ and the action of the unitary $U_{\overline{E}}: RA \rightarrow \overline{E}$ via the equation 
\[
\tr_4 (\widetilde{\rho}) = \mathcal U_{\overline{E}} (\rho_R \otimes \Gamma_A).  
\]

By choosing $\widetilde{\rho} = \kb{\Tilde{0}}{\Tilde{0}}$, we find 
\[
\tr_4 ( \kb{\tilde{0}}{\tilde{0}} ) = \frac12 ( \kb{000}{000} + \kb{111}{111}  ).
\]
Hence $A = \mathbb C^2$ and $\{ \frac12 , \frac12 \}$ are the (non-zero) eigenvalues of the ancilla state; that is, $\Gamma_A = \frac12 I_2$. 
Calculating, we find we can define the unitary $U_{\overline{E}}$ via its action on basis states as: 
\[
\begin{array}{cccc} 
R & A & U_{\overline{E}}  & \overline{E}  \\ \hline 
\ket{0} & \left\{ \begin{array}{l}  \ket{0} \\ \ket{1}  \end{array} \right. & \longmapsto & \left\{ \begin{array}{l}  \ket{000} \\ \ket{111}  \end{array} \right. \\ 
\ket{1} & \left\{ \begin{array}{l}  \ket{0} \\ \ket{1}  \end{array} \right. & \longmapsto & \left\{ \begin{array}{l}  \ket{011} \\ \ket{100}  \end{array} \right. \\  
\ket{2} & \left\{ \begin{array}{l}  \ket{0} \\ \ket{1}  \end{array} \right. & \longmapsto & \left\{ \begin{array}{l}  \ket{110} \\ \ket{001}  \end{array} \right. \\  
\ket{3} & \left\{ \begin{array}{l}  \ket{0} \\ \ket{1}  \end{array} \right. & \longmapsto & \left\{ \begin{array}{l}  \ket{101} \\ \ket{010}  \end{array} \right. 
\end{array} 
\]

In particular, for each $0 \leq i \leq 3$ one can verify that we have 
\[
  \ket{\widetilde{i}} = \big( U_{\overline{E}} \otimes I_E \big) \, ( \ket{i}_R \otimes \ket{\psi}_{A E}),  
\]
where here $\ket{\psi}_{AE} = \frac{1}{\sqrt{2}} (\ket{00} + \ket{11})$ is the canonical maximally entangled state between the ancilla and erased qubit (the purification of $\Gamma_A$). For example, with $i=2$ we have 
\[
 \big( U_{\overline{E}} \otimes I_E \big) \, ( \ket{2}_R \otimes \ket{\psi}_{A E}) = \frac{1}{\sqrt{2}} \big( U_{\overline{E}} \otimes I_E \big) \, \big( \ket{2}_R \otimes   ( \ket{0}_A \otimes \ket{0}_E +  \ket{1}_A \otimes \ket{1}_E ) \big) =   \ket{\widetilde{2}} .   
\]

As an indication of the utility of the theorem, note that any state $\ket{\psi}_{AE}$ on $AE$ (and a fixed isometry $U_{\overline{E}}$) defines a code via the codeword condition~$(iv)$ for erasure of the fourth qubit. One could change it to another maximally entangled state, which would change the code words but maintain the correctability of the code for all single qubit erasures (as maximally entangled states on $AE$ are related by a local unitary on $A$, which can be absorbed into the isometry).  Using a separable state would also give a different correctable code, and give trivial codes in the sense above in that they are directly determined by subspaces of $\overline{E}$, though such codes would no longer be correctable for the other qubit erasures. One could also work with the operator $\Gamma_A$ to define different codes. For instance, with the same isometry and putting $\Gamma_A = p_1 \kb{0}{0} + p_2 \kb{1}{1}$ instead of the maximally mixed state, we would keep the same basis states that define the code states $\ket{\widetilde{i}}$ with coefficients changed to $\sqrt{p_1}$, $\sqrt{p_2}$ above (instead of $\frac{1}{\sqrt{2}}, \frac{1}{\sqrt{2}}$). 

Further, one could consider different isometries $U_{\overline{E}}$ and generate different codes. For instance, we could keep $A= \mathbb C^2$ and $\Gamma_A = \frac12 I_2$ (and $\ket{\psi}_{AE}$ equal to the maximally entangled state above), and define a qubit code $S \cong R = \mathbb C^2$ with the isometry defined as: 
\[
\begin{array}{cccc} 
R & A & U_{\overline{E}}  & \overline{E}  \\ \hline 
\ket{0} & \left\{ \begin{array}{l}  \ket{0} \\ \ket{1}  \end{array} \right. & \longmapsto & \left\{ \begin{array}{l}  \ket{000} + \ket{111} \\ \ket{000} - \ket{111}  \end{array} \right. \\ 
\ket{1} & \left\{ \begin{array}{l}  \ket{0} \\ \ket{1}  \end{array} \right. & \longmapsto & \left\{ \begin{array}{l}  \ket{011} + \ket{100}  \\ \ket{011} - \ket{100}   \end{array} \right.  
\end{array} 
\]
This would define the correctable qubit code via Eq.~(\ref{stateunitaryequation}) defined by the following basis states: 
    \begin{align}
        \ket{\tilde{0}} &=  \ket{0000}+\ket{1110} + \ket{0001} - \ket{1111}  &   \notag \\
       \ket{\tilde{1}} &=  \ket{0110}+\ket{1000} + \ket{0111} - \ket{1001}  \notag \\
       \notag
    \end{align} 
By adding an extra dimension to make $R$ a qutrit, and extending the isometry to $\ket{2}\otimes A$, one obtains a qutrit correctable code for erasure of the fourth qubit, and a simple example of a code with dimension that does not divide the dimension of the overall Hilbert space ($\dim S =3$, $\dim \mathcal H = 8$). Another way to define a simple qutrit code example here would be to restrict the original two-qubit unitary above to a three-dimensional subspace. (A more interesting example of this dimensionality phenomena is given in Example~\ref{nondividingdimensioneg}.)    
\end{example}

We can also use the structure theorem to give alternate proofs of previous results. For instance, Theorem~1 of \cite{grassl1997codes} asserts that if a correctable erasure code has a single state that has a tensor factor belonging to the erased qubits, then every state in the code must have that same state factor. This is a key result in the building of other results in \cite{grassl1997codes} and subsequent works, and we can obtain this as a straightforward consequence of the code state description of the theorem. 

\begin{cor}[Theorem~1 of \cite{grassl1997codes}]
Let $S$ be a correctable code for an erasure channel $\mathcal E_E$ on $\mathcal H = \overline{E}E$. If there is a state $\ket{\widetilde{\psi}} \in S$ for which there is a state  $\ket{\psi_0}$ on a subset $E_2$ of the qudits $E=E_1E_2$ such that $\ket{\widetilde{\psi}} = \ket{\psi^\prime}\ket{\psi_0}$ for some state $\ket{\psi^\prime}$ on $\overline{E}E_1$, then $\ket{\psi_0}$ is a factor of every state in $S$; that is, $S = S^\prime \otimes \ket{\psi_0}$ where $S^\prime$ is a subspace of $\overline{E}E_1$. 
\end{cor}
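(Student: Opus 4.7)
The plan is to leverage condition $(iv)$ of the Structure Theorem, which provides a universal representation of every code state through a single fixed vector $\ket{\psi}_{AE}$ and isometry $U_{\overline{E}}:RA\to\overline{E}$. The strategy is to show that the hypothesis forces $\ket{\psi}_{AE}$ itself to factor across the $AE_1:E_2$ bipartition, with $\ket{\psi_0}_{E_2}$ as its tensor factor on $E_2$; the factorization of the entire code space then falls out immediately from Eq.~(\ref{stateunitaryequation}).

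First, I would apply condition $(iv)$ to the distinguished codeword: there is some vector $\ket{\varphi_0}_R$ with $\ket{\widetilde{\psi}} = (U_{\overline{E}}\otimes I_E)(\ket{\varphi_0}_R \otimes \ket{\psi}_{AE})$. Next, compute the reduced density operator of $\op{\widetilde{\psi}}{\widetilde{\psi}}$ on the subsystem $E_2$. Using $U_{\overline{E}}^\dagger U_{\overline{E}}=I_{RA}$ together with the fact that $\ket{\varphi_0}_R$ is normalized, tracing out $\overline{E}$ leaves the state $\tr_A(\op{\psi}{\psi}_{AE})$, and further tracing $E_1$ gives $\tr_{\overline{E}E_1}(\op{\widetilde{\psi}}{\widetilde{\psi}}) = \tr_{AE_1}(\op{\psi}{\psi}_{AE_1E_2})$. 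On the other hand, the hypothesis $\ket{\widetilde{\psi}}=\ket{\psi'}_{\overline{E}E_1}\otimes\ket{\psi_0}_{E_2}$ forces the same reduction to equal the rank-one projector $\op{\psi_0}{\psi_0}_{E_2}$. Since $\ket{\psi}_{AE}$ is itself pure, a rank-one marginal on $E_2$ is exactly the condition for it to be a product across $AE_1:E_2$, so $\ket{\psi}_{AE} = \ket{\chi}_{AE_1}\otimes\ket{\psi_0}_{E_2}$ for some $\ket{\chi}_{AE_1}$ after absorbing a phase.

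Substituting this factored form back into Eq.~(\ref{stateunitaryequation}) and noting that $U_{\overline{E}}\otimes I_E$ acts as the identity on $E_2$, each code basis vector splits as
\[
\ket{\widetilde{i}} \;=\; \ket{\widetilde{i}\,'}_{\overline{E}E_1}\otimes\ket{\psi_0}_{E_2}, \qquad \text{where}\ \ \ket{\widetilde{i}\,'} \;:=\; (U_{\overline{E}}\otimes I_{E_1})(\ket{i}_R\otimes\ket{\chi}_{AE_1}).
\]
Setting $S' := \mathrm{span}\{\ket{\widetilde{i}\,'}\}\subseteq\overline{E}E_1$ then gives the claimed decomposition $S = S'\otimes \ket{\psi_0}$.

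The only real subtlety, and the main piece of care required, is the bookkeeping around which subsystem the ancilla $A$ groups with: it must be paired with $E_1$ (rather than $E_2$) when one invokes purity of the reduction. Once that is laid out, the argument reduces to the standard fact that a pure bipartite state with a rank-one marginal is a product, applied once to the universal vector $\ket{\psi}_{AE}$ supplied by the Structure Theorem.
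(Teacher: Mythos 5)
Your proof is correct and follows essentially the same route as the paper, which invokes condition $(iv)$ of the Structure Theorem and asserts (without detail) that a tensor factor on $E_2$ in one codeword forces the fixed state $\ket{\psi}_{AE}$ to carry that same factor, hence every codeword does. Your write-up simply fills in the step the paper leaves implicit — computing $\tr_{\overline{E}E_1}(\op{\widetilde{\psi}}{\widetilde{\psi}})=\tr_{AE_1}(\op{\psi}{\psi}_{AE})=\op{\psi_0}{\psi_0}$ and using purity of this marginal to factor $\ket{\psi}_{AE}$ across $AE_1:E_2$ — and this is a valid and complete justification.
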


\begin{proof} 
In light of the theorem, this can seen to be a relic of condition~$(iv)$; namely, if a code state has a tensor factor over the erased qubits, then the fixed state $\ket{\psi}_{AE}$ is separable with that same tensor factor, and hence so are all of the code states. 
\end{proof}

\subsection{Secret Sharing Seminal Example}

As a further illustration of the structure theorem, we next revisit one of the most referenced works on quantum erasure codes and a key starting point for quantum secret sharing scheme investigations. 

\begin{example}
An important example of a threshold secret sharing scheme introduced by Cleve, et al, in 
\cite{cleve1999share} encodes a qutrit into three-qutrit space with the following (unnormalized) basis states: 
    \begin{align}
        \ket{\widetilde{0}} = \ket{000} + \ket{111} + \ket{222}  \notag \\
       \ket{\widetilde{1}} = \ket{012} + \ket{120} + \ket{201} \notag \\ 
       \ket{\widetilde{2}} = \ket{021} + \ket{102} + \ket{210} \notag
    \end{align}

The qutrit code $S= \mathrm{span}\, \{ \ket{\widetilde{0}}, \ket{\widetilde{1}}, \ket{\widetilde{2}} \}$ spanned by the three encoded states above is a correctable code on $\mathcal H = (\mathbb C^3)^{\otimes 3}$ for each of the three individual qutrit erasure maps (but not any two erasures simultaneously, hence the threshold designation of the code). This particular code has been widely referred to and often used for illustrative purposes, including more recently in the context of black hole theory \cite{AXH15}, where the unitary and information-theoretic forms for this code (i.e., conditions $(iv)$ and $(v)$) were identified, and then building on that work in \cite{10206477} as a foundation for analysis of tripartite secret sharing schemes. 

As the unitary that corresponds to erasing the third qutrit was explicitly given in \cite{AXH15}, here let us give a unitary determined by erasure of the first qutrit. In this case we have $\mathcal H = E \overline{E} = \mathbb C^3 \otimes ( (\mathbb C^3)^{\otimes 2} )$ and the reference system $R = \mathrm{span} \, \{ \ket{0}_R, \ket{1}_R, \ket{2}_R \} \cong S$. To find the size of $A$ and $\Gamma_A$, as in the previous example we can choose $\widetilde{\rho} = \kb{\widetilde{0}}{\widetilde{0}}$ and use the relation:  
\[
\mathcal U_{\overline{E}} (\kb{0}{0}_R \otimes \Gamma_A) = \tr_1 (\kb{\widetilde{0}}{\widetilde{0}} )  = \frac13 \big( \kb{00}{00} +   \kb{11}{11} + \kb{22}{22}  \big) . 
\]
Hence, we take $A = \mathbb C^3$ and $\Gamma_A = \frac13 I_3$ here as we have $\mathrm{rank}(\Gamma_A)=3$ and its eigenvalues are $\{ \frac13, \frac13, \frac13 \}$.  
One can then define the unitary action on basis states by direct calculation, given as follows: 
\[
\begin{array}{cccc} 
R & A & U_{\overline{E}}  & \overline{E}  \\ \hline 
\ket{0} & \left\{ \begin{array}{l}  \ket{0} \\ \ket{1} \\ \ket{2} \end{array} \right. & \longmapsto & \left\{ \begin{array}{l}  \ket{00} \\ \ket{11} \\ \ket{22} \end{array} \right. \\ 
\ket{1} & \left\{ \begin{array}{l}  \ket{0} \\ \ket{1} \\ \ket{2} \end{array} \right. & \longmapsto & \left\{ \begin{array}{l}  \ket{12} \\ \ket{20} \\ \ket{01}  \end{array} \right. \\  
\ket{2} & \left\{ \begin{array}{l}  \ket{0} \\ \ket{1} \\ \ket{2} \end{array} \right. & \longmapsto & \left\{ \begin{array}{l}  \ket{21} \\ \ket{02} \\ \ket{10}  \end{array} \right.  
\end{array} 
\]
Then, again, one can verify here that for each $0 \leq i \leq 2$ we have 
\[
  \ket{\widetilde{i}} = \big( U_{\overline{E}} \otimes I_E \big) \, ( \ket{i}_R \otimes \ket{\psi}_{A E}),  
\]
where in this case $\ket{\psi}_{AE} = \frac{1}{\sqrt{3}} (\ket{00} + \ket{11} + \ket{22})$ is the purification of $\Gamma_A$. 
\end{example}

As above, we can also use the structure theorem to find alternative proofs of results, and in this case, we find the theorem gives us additional information on a key result from \cite{cleve1999share,gottesman2000theory}. The following description of correctable erasure codes was uncovered in \cite{cleve1999share}, and also stated as Theorem~1 in \cite{gottesman2000theory}. 

\begin{cor}\label{correctablecor}
    Let $S$ be a subspace of $\mathcal H = \overline{E} E$. Then $S$ corrects erasure errors on $E$ if and only if for every operator $X = I_{\overline{E}} \otimes X_E$ on $\mathcal H$ that only acts non-trivially on $E$, there is a scalar $c(X)$ such that 
    \[
    \bra{\widetilde{\phi}} X \ket{\widetilde{\phi}} = c(X), 
    \]
    for all $\widetilde{\phi}\in S$. Further, when this condition is satisfied we have 
    \[
    c(X) =  \bra{\psi} \, (I_A \otimes X_E) \,   \ket{\psi} , 
    \]
    where $A$ is the ancilla and $\ket{\psi} = \ket{\psi}_{AE}$ is the state from condition~$(iv)$ of the structure theorem. 
\end{cor}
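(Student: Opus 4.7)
The plan is to prove both directions by leveraging the structure theorem, specifically the codeword form in condition~$(iv)$, together with the standard Knill--Laflamme description of correctable codes applied to the erasure channel's Kraus operators. The forward direction will be a one-line calculation once we invoke condition~$(iv)$, while the converse requires converting the scalar-expectation hypothesis into the full Knill--Laflamme identities, and this polarization step is where most of the care is required.

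For the forward direction, suppose $S$ is correctable for the erasure channel $\mathcal{E}_{E,CD}$. By the structure theorem, there exist an ancilla $A$, an isometry $U_{\overline{E}}: RA\to\overline{E}$, and a state $\ket{\psi}_{AE}$ such that every code basis vector has the form $\ket{\widetilde{i}} = (U_{\overline{E}}\otimes I_E)(\ket{i}_R\otimes\ket{\psi}_{AE})$. For any $X = I_{\overline{E}}\otimes X_E$, I would compute
\[
\bra{\widetilde{i}}X\ket{\widetilde{j}} = \bra{j}_R\bra{\psi}_{AE}(U_{\overline{E}}^\dagger I_{\overline{E}} U_{\overline{E}}\otimes X_E)\ket{i}_R\ket{\psi}_{AE} = \delta_{ij}\,\bra{\psi}(I_A\otimes X_E)\ket{\psi},
\]
using $U_{\overline{E}}^\dagger U_{\overline{E}} = I_{RA}$. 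Extending linearly to an arbitrary normalized $\ket{\widetilde{\phi}}\in S$ gives $\bra{\widetilde{\phi}}X\ket{\widetilde{\phi}} = c(X)$ with the stated explicit value $c(X) = \bra{\psi}(I_A\otimes X_E)\ket{\psi}$.

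For the converse, assume the scalar condition holds. The plan is to first upgrade the diagonal identity $\bra{\widetilde{\phi}}X\ket{\widetilde{\phi}} = c(X)$ to a full Knill--Laflamme-type identity on matrix elements, and then recognize it as correctability for the erasure channel. Fix an orthonormal basis $\{\ket{\widetilde{i}}\}$ of $S$, and for $i\neq j$ apply the hypothesis to the four vectors $\tfrac{1}{\sqrt{2}}(\ket{\widetilde{i}}\pm\ket{\widetilde{j}})$ and $\tfrac{1}{\sqrt{2}}(\ket{\widetilde{i}}\pm i\ket{\widetilde{j}})$; subtracting the diagonal contributions (which are both $c(X)$) isolates the real and imaginary parts of $\bra{\widetilde{i}}X\ket{\widetilde{j}}$ and forces both to vanish. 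Hence $\bra{\widetilde{i}}X\ket{\widetilde{j}} = \delta_{ij} c(X)$ for every $X = I_{\overline{E}}\otimes X_E$.

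Now the erasure channel $\mathcal{E}_{E,CD}$ has Kraus operators of the form $M_{k\ell} = (\dim E)^{-1/2}\bigl(I_{\overline{E}}\otimes\op{k}{\ell}_E\bigr)$, so each product $M_{k\ell}^\dagger M_{k'\ell'}$ is again an operator of the form $I_{\overline{E}}\otimes Y_E$ with $Y_E = (\dim E)^{-1}\op{\ell}{k}\op{k'}{\ell'}_E$. Applying the diagonalized identity just established to these products yields the Knill--Laflamme conditions $\bra{\widetilde{i}}M_{k\ell}^\dagger M_{k'\ell'}\ket{\widetilde{j}} = \delta_{ij}\,c_{(k\ell)(k'\ell')}$, and hence $S$ is correctable for $\mathcal{E}_{E,CD}$, which by Lemma~\ref{jointcorrectable} is equivalent to correctability for any $E$-replacer channel. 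The main obstacle I anticipate is being careful with the polarization step to ensure that \emph{every} $X$ of the given form — not just self-adjoint ones — acquires the diagonalized form, since the hypothesis is stated only for quadratic forms of individual states; the four-term polarization identity handles both real and imaginary parts and this is what makes the argument go through cleanly.
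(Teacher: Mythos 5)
Your proof is correct and takes essentially the same route as the paper: the equivalence is obtained from the Knill--Laflamme conditions applied to the erasure channel's Kraus operators, whose pairwise products stay of the form $I_{\overline{E}}\otimes Y_E$, and the explicit value $c(X)=\bra{\psi}(I_A\otimes X_E)\ket{\psi}$ is read off from the codeword form in condition~$(iv)$ of the structure theorem. The only difference is that you spell out the polarization step upgrading the diagonal hypothesis to $\bra{\widetilde{i}}X\ket{\widetilde{j}}=\delta_{ij}\,c(X)$, which the paper defers to the cited references; that step and your computation are sound, apart from a harmless transposition of the indices $i,j$ in your first display.
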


\begin{proof} 
The description of a correctable erasure code $S$ in terms of the scalars $c(X)$ is a straightforward application of the Knill-Laflamme Theorem \cite{knill1997theory}, as noted in \cite{cleve1999share,gottesman2000theory}, with the key observation being that one can obtain sets of Kraus operators for erasure channels that form a multiplicatively closed set up to scalar multiples. To identify the scalars $c(X)$, we can use the explicit code state form given in Eq.~(\ref{stateunitaryequation}) of the theorem to obtain them as follows: for all $\ket{\widetilde{\phi}}\in S$ and $X = I_{\overline{E}} \otimes X_E$, we have 
\begin{eqnarray*}
     \bra{\widetilde{\phi}} X \ket{\widetilde{\phi}} &=& 
     (\bra{\psi}_R \otimes \bra{\psi}_{AE} ) ( U_{\overline{E}}^\dagger \otimes I_E ) (I_{\overline{E}} \otimes X_E)  ( U_{\overline{E}} \otimes I_E ) (\ket{\psi}_R \otimes \ket{\psi}_{AE} ) \\ 
     &=& 
     (\bra{\psi}_R \otimes \bra{\psi}_{AE} ) ( U_{\overline{E}}^\dagger U_{\overline{E}} \otimes X_E )  (\ket{\psi}_R \otimes \ket{\psi}_{AE} ) \\ 
     &=& \big( \langle \psi |  \psi \rangle_R \big) \, \big(  \underbrace{\bra{\psi}_{AE} \, (I_A \otimes X_E) \,  \ket{\psi}_{AE}}_{c(X)} \big) ,  
\end{eqnarray*}
where we have used the isometric relation $U_{\overline{E}}^\dagger U_{\overline{E}}= I_{RA} = I_R \otimes I_A$ in the last line above. 
\end{proof} 

\subsection{Non-Natural-Subsystem Replacer Codes}

We next present another, somewhat more interesting example that helps illustrate some of the subtlety involved for codes that cannot be identified with a natural subsystem of the overall Hilbert space of the system. 

\begin{example}\label{nondividingdimensioneg} 
In terms of the theorem notation, this example illustrates the nature of the system $R$ and stresses that it need not always be associated with a particular qudit ``register''. Consider the following encoded basis states for a single qubit code on a two ququart system $\mathcal H = \mathbb{C}^4 \otimes \mathbb{C}^4$ defined as follows:
    \begin{align}
        \ket{\widetilde{0}} = \ket{00}+\ket{11} \notag \\
        \ket{\widetilde{1}} = \ket{20}+\ket{31} \notag
    \end{align}
    
This code is correctable for erasure (recall that means the replacer channel with the completely depolarizing channel) of the second ququart system. This can be done practically using the following process, viewing the two systems as possessed by two parties Alice (1st system) and Bob (2nd system): After the erasure of the second system, Alice applies a projection of her state into the $\{\ket{0},\ket{2}\}$ and $\{\ket{1},\ket{3}\}$ bases. This allows her to recover the information contained in the state. Then it is possible to recover the original state by adding an ancillary system initialized in the $\ket{0}$ state and applying a joint operator. 

Let us view this example from the structure theorem perspective. Here we have $\mathcal H = \overline{E} E = \mathbb{C}^4 \otimes \mathbb{C}^4$ and $R = \mathrm{span}\, \{ \ket{0}_R, \ket{1}_R \}$. We can find $A$ (which can only be $\mathbb{C}$ or $\mathbb{C}^2$) and $\Gamma_A$ as we have done previously by computing the partial trace of code basis states. As a comparison, note that we get different states here: 
\[
\tr_2 ( \kb{\widetilde{0}}{\widetilde{0}} ) = \frac12 (\kb{0}{0} + \kb{1}{1}    )  \quad \mathrm{and} \quad 
  \tr_2 ( \kb{\widetilde{1}}{\widetilde{1}} ) = \frac12 (\kb{2}{2} + \kb{3}{3}    )  .  
\]
As they are rank-2 operators, we have $A=\mathbb{C}^2$, and given their eigenvalues, we have $\Gamma_A = \frac12 I_A$. This further yields $ \sigma_{AE} = \Gamma_A \otimes \sigma_E = \frac18 I_{AE}$, as $\sigma_E = \frac14 I_E$, and $\ket{\psi}_{AE} = \frac{1}{\sqrt{2}} (\ket{00}_{AE}  + \ket{11}_{AE})$ is a purification of $\Gamma_A$. 
We can then define the unitary as follows: 
\[
\begin{array}{cccc} 
R & A & U_{\overline{E}}  & \overline{E}  \\ \hline 
\ket{0} & \left\{ \begin{array}{l}  \ket{0} \\ \ket{1} \end{array} \right. & \longmapsto & \left\{ \begin{array}{l}  
\ket{0} \\ 
\ket{1} \end{array} \right. \\ 
\ket{1} & \left\{ \begin{array}{l}  \ket{0} \\ \ket{1} \end{array} \right. & \longmapsto & \left\{ \begin{array}{l}   
\ket{2} \\ 
\ket{3} 
\end{array} \right. 
\end{array} 
\] 
Indeed, one can verify directly from the table and definition of the code basis states that this yields $\ket{\widetilde{i}} = \big( U_{\overline{E}} \otimes I_E \big) \, ( \ket{i}_R \otimes \ket{\psi}_{A E})$ for $i=1,2$. 

Let us observe the information-theoretic condition~$(iv)$ of the theorem explicitly in this case. We have the state $\ket{\phi}$ defined as: 
    \begin{align}
        \ket{\phi} = \frac{1}{\sqrt{2}}(\ket{0}_{\T{Q}}\ket{\Tilde{0}}_{\B{E}E}+\ket{1}_{\T{Q}}\ket{\Tilde{1}}_{\B{E}E}). 
    \end{align}
    We calculate to find $\rho^{\T{Q}} = \tr_{\overline{E}E}(\kb{\phi}{\phi}) = \frac12 I_{\T{Q}}$ and $\rho^{E} = \tr_{\T{Q}\overline{E}}(\kb{\phi}{\phi}) = \frac12 (\kb{0}{0} + \kb{1}{1})_E$, where note the latter operator is a rank-2 projection on the 4-dimensional space $E$. We can now see that the condition is indeed satisfied:  
    \begin{align}
        \rho^{\T{Q}E} &= \tr_{\B{E}} ( \ket{\phi}\bra{\phi} ) \notag \\
        &= \frac{1}{4}(\ket{00}\bra{00}+\ket{01}\bra{01}+\ket{10}\bra{10}+\ket{11}\bra{11})_{\T{Q}E}  \notag \\
        &= \rho^{\T{Q}}\otimes\rho^E . 
    \end{align}
    As a corollary, also observe that the mutual information of these joint systems is equal to $0$: 
    \[
        I_{\T{Q}E} =S_{\T{Q}}+S_E-S_{\T{Q}E} 
        = \ln{2}+\ln{2}-\ln{4} 
        = 0 . 
    \]

Based on the construction of the unitary above, one can see there is an equivalent example that uses two qubits instead of a single $4$-dimensional qubit, in which the subsystem does correspond to the first qubit register.
Designate two single qubit systems by $1_a, 1_b$ and let the encoded basis states for an erasure correctable code be as follows:
    \begin{align}
        \ket{\Tilde{0}} = (\ket{000}+\ket{011})_{1_a1_bB} \notag \\
        \ket{\Tilde{1}} = (\ket{100}+\ket{111})_{1_a1_bB} \notag
    \end{align}
    The erased system $E$ is still the system $B$, while $R=1_a$ so that the erasure may be corrected.
    The unitary will map to $\ket{00},\ket{01},\ket{10},\ket{11}$ (instead of $\ket{0},\ket{1},\ket{2},\ket{3}$).
With respect to the theorem these two cases are effectively equivalent and satisfy the conditions identically, underscoring the generality of the theorem.
\end{example}    

\subsection{Stabilizer Codes and Cleaning Lemma}

We finish by considering the important class of quantum stabilizer codes, and we revisit the well-known `Cleaning Lemma' for such codes in light of the theorem.

The {\it stabilizer formalism} of Gottesman \cite{gottesman1996class,gottesman1997stabilizer} gives a framework to build and characterize codes for Pauli error models. The starting point for an $n$-qubit stabilizer code $S$ is an Abelian subgroup $\mathcal S$ of the Pauli group $\mathcal P_n$ that does not contain $-I$. The stabilizer subspace for $\mathcal S$, which is the code space, is $S = \mathrm{span} \{ \ket{\psi} \, : \, P \ket{\psi} = \ket{\psi} \,\, \forall\, P \in \mathcal S\}$. The code can encode $k$ logical qubits (i.e., it is $2^k$-dimensional) exactly when $\mathcal S$ has $n-k$ independent generators. The normalizer subgroup $\mathcal N(\mathcal S)$ of $\mathcal S$ inside $\mathcal P_n$ coincides with its centralizer $\mathcal Z(\mathcal S)$, as every element of $\mathcal P_n$ either commutes or anti-commutes and $-I \notin \mathcal S$. 
A main result in the stabilizer formalism asserts that a stabilizer code $S$ defined by a group $\mathcal S$ is correctable for a set of Pauli error operators $\{E_i\}$ exactly when all the operator products $E_i^\dagger E_j$ {\it do not} belong to the set $\mathcal N(\mathcal S) \setminus \langle \mathcal S, iI \rangle$. 

In applications of the stabilizer formalism, one often considers stabilizer codes that are correctable for full erasures of subsets of qubits. Indeed, given a stabilizer code, a subset of qubits $E$ is said to be `correctable' if (in our terminology) the code is correctable for the replacer channel $\mathcal E_E$ with $\mathcal D_E(\rho_E) \propto I_E$ (i.e., the erasure channel on $E$).  
Note that every such channel can be implemented with a set of (unnormalized) Kraus operators drawn from the Pauli group with support contained in $E$, where $\mathrm{supp}(P)\subseteq E$ for $P\in \mathcal P_n$ means that $P$ acts trivially (i.e., as the identity operator) on the complementary qubits $\overline{E}$.  

Thus, we can apply the structure theorem to such codes, and jointly make use of Gottesman's characterization of correctable sets of error operators via the normalizer subgroup to obtain information. We exhibit this perspective by giving the details for a seminal example in the formalism, and then, taking motivation from the situation for stabilizer codes specifically, we prove a general result for replacer codes that follows from the structure theorem. 

\begin{example}
The five-qubit code encoding one logical qubit and correcting any two erasures, discovered independently in~\cite{PhysRevLett.77.198,PhysRevA.54.3824}, is unique up to local equivalences. This code is notable as it is both perfect and MDS, which in particular implies it defines a perfect secret sharing scheme. Following \cite{gottesman1996class,gottesman1997stabilizer}, we choose as a basis for our code $S$ the (unnormalized) states:
\begin{align*}
    \ket{\widetilde{0}} = & \quad\:\ket{00000} + \ket{10010} + \ket{01001} + \ket{10100} + \ket{01010} - \ket{11011} - \ket{00110} - \ket{11000} \\
    & -\ket{11101} - \ket{00011} - \ket{11110} - \ket{01111} - \ket{10001} - \ket{01100} - \ket{10111} + \ket{00101} \\
    \ket{\widetilde{1}} = & \quad\:\ket{11111} + \ket{01101} + \ket{10110} + \ket{01011} + \ket{10101} - \ket{00100} - \ket{11001} - \ket{00111} \\
    & -\ket{00010} - \ket{11100} - \ket{00001} - \ket{10000} - \ket{01110} - \ket{10011} - \ket{01000} + \ket{11010} \\
\end{align*}
The code $S$ is a stabilizer code, stabilized by the group 
\begin{equation*}
  \mathcal S =  \left\langle X\otimes Z\otimes Z\otimes X\otimes I, I\otimes X\otimes Z\otimes Z\otimes X, X\otimes I\otimes X\otimes Z\otimes Z, Z\otimes X\otimes I\otimes X\otimes Z\right\rangle,
\end{equation*}
and has logical operators $\widetilde{X}=X\otimes X\otimes X\otimes X\otimes X$ and $\widetilde{Z}=Z\otimes Z\otimes Z\otimes Z\otimes Z$, both of which observe belong to $\mathcal N(\mathcal S) \setminus \langle \mathcal S, iI \rangle$.

If we consider erasure of qubits 4 and 5, for instance, we can obtain $A$ and $\Gamma_A$ as in the examples above. Calculating we find 
$\mathcal U_{\overline{E}} (\kb{0}{0}_R \otimes \Gamma_A) = \tr_{45} (\kb{\widetilde{0}}{\widetilde{0}} )$ has rank-4 and all its eigenvalues are $\frac14$. 
Hence we take $A = \mathbb C^4$ and $\Gamma_A = \frac14 I_4$ here. Then we can define the unitary $U_{\overline{E}}$ via its action on basis states as follows:
\[
\begin{array}{cccc} 
R & A & U_{\overline{E}}  & \overline{E}  \\ \hline 
\ket{0} & \left\{ \begin{array}{l}  \ket{00} \\ \ket{01} \\ \ket{10} \\ \ket{11} \end{array} \right. & \longmapsto & \left\{ \begin{array}{l} +(\ket{000} - \ket{011} + \ket{101} - \ket{110}) \\ +(\ket{001} + \ket{010} - \ket{100} - \ket{111}) \\ -(\ket{001} - \ket{010} - \ket{100} + \ket{111}) \\ -(\ket{000} + \ket{011} + \ket{101} + \ket{110}) \end{array} \right. \\ 
\ket{1} & \left\{ \begin{array}{l}  \ket{00} \\ \ket{01} \\ \ket{10} \\ \ket{11} \end{array} \right. & \longmapsto & \left\{ \begin{array}{l} -(\ket{001} + \ket{010} +\ket{100} + \ket{111}) \\ -(\ket{000} - \ket{011} - \ket{101} + \ket{110}) \\ -(\ket{000} + \ket{011} - \ket{101} - \ket{110}) \\ -(\ket{001} - \ket{010} + \ket{100} - \ket{111}) \end{array} \right. \\  
\end{array} 
\]

One can verify from the table and definition of the code basis states that for $i=0,1$ we have 
\begin{equation*}
    \ket{\widetilde{i}}=\left(U_{\overline{E}}\otimes I_E\right)\left(\ket{i}\otimes\ket{\psi}_{AE}\right),
\end{equation*}
where here $\ket{\psi}_{AE}$ is two copies of the canonical maximally entangled state between the ancilla and erased qubits:
\begin{equation*}
    \ket{\psi}_{AE}=\frac{1}{2}\left(\ket{00}_A\otimes\ket{00}_E+\ket{01}_A\otimes\ket{01}_E+\ket{10}_A\otimes\ket{10}_E+\ket{11}_A\otimes\ket{11}_E\right).
\end{equation*}

As a lead-in to our discussion below on the Cleaning Lemma, let us note in this case we can find logical operators $\widetilde{X}',\widetilde{Z}'$, anti-commuting and belonging to the normalizer subgroup, whose support is contained in $\overline{E}$:
\begin{align*}
    \widetilde{X}' & = Z\otimes X\otimes Z\otimes I\otimes I \\
    \widetilde{Z}' & = Y\otimes Z\otimes Y\otimes I\otimes I
\end{align*}
Additionally, we can pick a set of stabilizer generators such that for each qubit in $E$, there is a unique generator with an $X$ on the qubit and a unique generator with $Z$ on the qubit, with all other generators having $I$ on the qubit. In this case, we have:
\begin{align*}
    Z\otimes Z\otimes X\otimes I\otimes X \\
    Y\otimes Y\otimes Z\otimes I\otimes Z \\
    X\otimes Z\otimes Z\otimes X\otimes I \\
    Z\otimes Y\otimes Y\otimes Z\otimes I
\end{align*}
\end{example}

The Cleaning Lemma \cite{Rai98,BT09,Ash20,Preskill_notes} for stabilizer codes is one of the most useful tools in the theory of such codes. It states that given a stabilizer code that is correctable for a subset of qubits $E$, for any logical operator $L\in \mathcal N(\mathcal{S})$, there exists a stabilizer operator $S\in \mathcal{S}$ such that $LS$ acts trivially on the qubits of $E$; i.e., $\mathrm{supp}(LS)\subseteq \overline{E}$. 

The following result can be viewed as a generalization of the Cleaning Lemma that applies to all replacer codes (we comment after the proof on this point). By the {\it commutant} $\mathcal S^\prime$ of a set of operators $\mathcal S$, we mean the set of all operators that commute with all the operators of $\mathcal S$; that is, $\mathcal S^\prime = \{ P \, : \, PQ = QP \,\,\, \forall Q\in\mathcal S\}$. We also recall the notation $\mathbb{T} = \{\lambda\in\mathbb{C} \, : \, |\lambda|=1\}$  for the unit circle in the complex plane. 

\begin{cor}\label{cleaninglemma}
Suppose $S$ is a subspace of $\mathcal H = \overline{E} E$ and let $P_S$ be the projection of $\mathcal H$ onto $S$. If $S$ is a correctable code for a replacer channel $\mathcal E_E$, then there are no unitary operators $L\in \{ P_S \}^\prime$ for which $L P_S \notin \mathbb{T} P_S$ and $\mathrm{supp}(L)\subseteq E$.
\end{cor}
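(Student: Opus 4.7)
The plan is to derive the conclusion directly from condition $(iv)$ of the structure theorem. Since $S$ is correctable for the replacer $\mathcal{E}_E$, there is an isometry $U_{\overline{E}}:RA\to \overline{E}$ and a fixed state $\ket{\psi}_{AE}$ such that a basis for $S$ is given by $\ket{\widetilde{i}} = (U_{\overline{E}}\otimes I_E)(\ket{i}_R\otimes\ket{\psi}_{AE})$. A unitary $L$ with $\mathrm{supp}(L)\subseteq E$ is, by definition, of the form $L = I_{\overline{E}}\otimes L_E$ for some unitary $L_E$ on $E$. The assumption $L\in\{P_S\}'$ means that $LS \subseteq S$, so in particular $L\ket{\widetilde{i}} = \sum_j \mu_{ji}\ket{\widetilde{j}}$ for some coefficients $\mu_{ji}$.

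Next I would compute $L\ket{\widetilde{i}}$ in two ways and compare. Since $L$ acts trivially on $\overline{E}$, it passes through $U_{\overline{E}}$ and acts only on the $E$ tensor factor of $\ket{\psi}_{AE}$, giving
\begin{equation*}
L\ket{\widetilde{i}} = (U_{\overline{E}}\otimes I_E)\bigl(\ket{i}_R\otimes (I_A\otimes L_E)\ket{\psi}_{AE}\bigr).
\end{equation*}
The stabilizing condition $L\ket{\widetilde{i}} = \sum_j\mu_{ji}\ket{\widetilde{j}}$ then gives the equality
\begin{equation*}
(U_{\overline{E}}\otimes I_E)\bigl(\ket{i}_R\otimes(I_A\otimes L_E)\ket{\psi}_{AE}\bigr) = (U_{\overline{E}}\otimes I_E)\Bigl(\sum_j\mu_{ji}\ket{j}_R\otimes\ket{\psi}_{AE}\Bigr).
\end{equation*}
Because $U_{\overline{E}}\otimes I_E$ is an isometry, it is left-cancellable, so the two vectors inside are equal on $RAE$.

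Reading off the $R$-factor on both sides (the left has only $\ket{i}_R$ while the right is a superposition over $\ket{j}_R$) forces $\mu_{ji}=0$ for $j\neq i$, and $(I_A\otimes L_E)\ket{\psi}_{AE} = \mu_{ii}\ket{\psi}_{AE}$. Since the right-hand side of this last equation does not depend on $i$, the scalar $\mu_{ii}$ is independent of $i$; call it $\mu$. Thus $L\ket{\widetilde{i}} = \mu\ket{\widetilde{i}}$ for every basis vector, so $LP_S = \mu P_S$. Unitarity of $L$ forces $|\mu|=1$, i.e.\ $\mu\in\mathbb{T}$, contradicting the hypothesis $LP_S\notin\mathbb{T}P_S$. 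Hence no such $L$ can exist.

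The only step requiring any care is the left-cancellation of the isometry $U_{\overline{E}}\otimes I_E$ and the subsequent matching of the $R$-register factor, but this is a routine consequence of $(U_{\overline{E}}\otimes I_E)^\dagger(U_{\overline{E}}\otimes I_E) = I_{RAE}$ together with the fact that different $\ket{i}_R$ are linearly independent. The conceptual heart of the argument, and really the whole point, is that condition $(iv)$ forces any operator acting only on the $E$ subsystem to interact with code states only via its action on $\ket{\psi}_{AE}$, so preserving the code space as a set already pins the action down to a global phase.
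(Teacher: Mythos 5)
Your proposal is correct and follows essentially the same route as the paper: invoke condition $(iv)$, write $L = I_{\overline{E}}\otimes L_E$, push $L_E$ past $U_{\overline{E}}$, cancel the isometry, and compare the $R$-register to conclude $LP_S=\mu P_S$ with $\mu\in\mathbb{T}$. The only cosmetic difference is that you match tensor factors directly (showing $\ket{\psi}_{AE}$ is an eigenvector of $I_A\otimes L_E$), while the paper takes inner products with $\bra{j}_R\bra{\psi}_{AE}$ to extract the same scalar; both are routine and the argument is sound.
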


\begin{proof}
Suppose we have a unitary operator $L\in \{ P_S \}^\prime$ such that $\mathrm{supp}(L)\subseteq E$, so that $L = I_{\overline{E}} \otimes L_E$ for some unitary $L_E$ on $E$. 

We can use the code state form given in condition~$(iv)$ of the structure theorem to find, for each basis state $ \ket{\widetilde{i}}$ of $S$, 
\begin{eqnarray*}
 L  \, \ket{\widetilde{i}} &=& \big( I_{\overline{E}} \otimes L_E \big) \big( U_{\overline{E}} \otimes I_E \big) \, ( \ket{i}_R \otimes \ket{\psi}_{A E}) \\ 
 &=&  \big( U_{\overline{E}} \otimes I_E \big) \big( I_{RA} \otimes L_E \big) \, ( \ket{i}_R \otimes \ket{\psi}_{A E}) \\ 
  &=&  \big( U_{\overline{E}} \otimes I_E \big) \, ( \ket{i}_R \otimes \big( ( I_{A} \otimes L_E   ) \ket{\psi}_{A E}\big) ) . 
\end{eqnarray*}
On the other hand, since $L$ commutes with $P_S$, there are scalars $\{a_j\}_j$ such that 
\[
L  \, \ket{\widetilde{i}} = L P_S \ket{\widetilde{i}} = P_S L   \ket{\widetilde{i}} = \sum_j a_j  \ket{\widetilde{j}} = \big( U_{\overline{E}} \otimes I_E \big) \, \big( \big( \sum_j a_j \ket{j}_R \big) \otimes \ket{\psi}_{A E} \big).  
\]

Hence we can multiply both descriptions of $L\ket{\widetilde{i}}$ by $U_{\overline{E}}^\dagger \otimes I_E$ and use the fact that $U_{\overline{E}}^\dagger U_{\overline{E}} = I_{RA}$ to obtain for all $ \ket{\widetilde{i}}$, 
\[
\ket{i}_R \otimes \big( ( I_{A} \otimes L_E   ) \ket{\psi}_{A E}\big) = 
\Big( \sum_j a_j \ket{j}_R \Big) \otimes \ket{\psi}_{A E} . 
\]
If we take the inner product of this vector with $\bra{j}_R\bra{\psi}_{AE}$ for any $j$, we get 
\[
\delta_{ij} \big( \,   \underbrace{\bra{\psi}_{AE} (I_{A} \otimes L_E   ) \ket{\psi}_{A E}}_{\lambda} \big) = a_j, 
\]
and note that the indicated scalar $\lambda$ is independent of $i$. (Observe this is the scalar of Corollary~\ref{correctablecor}.) 

Therefore,  we have $L \ket{\widetilde{i}} = \lambda \ket{\widetilde{i}}$ for all $i$, and so $LP_S = P_S L = \lambda P_S$ (also $\lambda\in\mathbb{T}$ as $L$ is unitary). The result follows.  
\end{proof}

\begin{remark}
We can obtain the explicit version of the Cleaning Lemma noted above by applying this result to stabilizer erasure codes. If $S$ is a stabilizer code with stabilizer group $\mathcal S$, then the set of Pauli operators that commute with $P_S$ is exactly the centralizer, $\mathcal Z(\mathcal S) = \mathcal N(\mathcal S)$, of $\mathcal S$ inside $\mathcal P_n$ (this follows from the explicit form for $P_S$ in terms of the elements of $\mathcal S$). Further, the elements of $\mathcal P_n$ that belong to $\mathbb{T} P_S$ when restricted to $S$ in that case are precisely the elements of the group $\langle \mathcal S, iI \rangle$. Hence, Corollary~\ref{cleaninglemma} applied to the stabilizer case says, when the qubits $E$ are correctable for the stabilizer code $S$, the set of elements in $\mathcal Z(\mathcal S)$ supported on $E$ is equal to the set of elements in $\mathcal S$ that are supported on $E$ (in general the former contains the latter, but they coincide in the correctable code case). One can then take the centralizer of both sets and intersect with the operators supported on $E$ to arrive at the explicit form.      

Of course, Corollary~\ref{cleaninglemma} applies to arbitrary replacer codes, and so it is natural to ask if a more explicit form of the generalized Cleaning Lemma holds, such as in the stabilizer case. There are some issues to sort out if one were to attempt to prove this is the case. For instance, while it seems natural that the commutant of the code space projection replaces the normalizer group in the general case, it is not immediately clear what operator set should replace the stabilizer group. We leave this as an open problem. 
\end{remark}

\section{Conclusion}\label{sec:conclusion}

Our main result can be viewed as a structure theorem for quantum replacer codes in that it gives multiple equivalent descriptions of such codes from different perspectives, each of which is useful in its own right. We discussed the conceptual viewpoint of the result and showed how to practically compute the key elements that define these codes. We then presented several examples and applications of the theorem, which we think give a glimpse of the potential utility of the result. We see this work as opening up a number of potential new lines of investigation. In addition to the topics of erasure conversion \cite{WKST22,Ketal23,KCB23} and black hole theory \cite{AXH15} mentioned above, we briefly discuss some other potential directions below. 

While we proved the theorem for replacer errors acting on a fixed subset of qudits, the theorem can be applied equally to a code that is correctable for replacer errors on multiple subsets of qudits. Most prominent amongst such scenarios are codes associated with quantum secret sharing schemes \cite{cleve1999share}, and indeed, some of our examples are of this type. 
We fully expect the theorem can be pushed further in such situations; for instance, presumably one would find constraints imposed on the possible ancilla $A$ and states $\ket{\psi_{AE}}$ and $\Gamma_A$ by requiring the code to be correctable across multiple sets of qudits. Exactly what this might mean for secret sharing schemes remains to be explored.    

In certain physically-relevant settings, considering only single erasures can still be interesting.
Ongoing research into different platforms for building quantum computers and networks connecting them has yielded much progress, but no one contender stands out as sufficient for all
desired outcomes.
Various implementation candidates for qubits include photons (for example, photonic chips acting as processors, and connections built by optical fibers) and solid-state systems such as superconductors, trapped ions, and neutral atoms.
Each has its own strengths and weaknesses (e.g., long/short coherence times, high/low loss, etc), and so researchers have started to combine these platforms, harnessing the strengths of each and using them together to counter the disadvantages \cite{Awshalometal21,Elshaarietal20,Chiaetal24}.
It is conceivable that consideration of only specific erasures may be necessary in this context.
For example, an atomic qubit with a long coherence time, whose presence in an array can be verified, can more confidently be assumed to not be subject to an erasure error.
On the other hand, we might desire for the atomic qubit to interact
with photonic qubits for the purpose of networking over longer distances; however, these photons are subject to a high degree of loss. Subsequently, these qubits could correspond to those with
potential for erasure, which must be corrected.

We included some corollaries of the theorem that give alternative derivations of established results, with new details provided by the theorem in some cases. One could push this idea further, and revisit other results (e.g., no-go results for erasure codes \cite{grassl1997codes}) in light of the theorem and explore for new approaches and details in the results. We also expect the general perspective of the theorem will lead to generalizations of certain results for erasure codes. Our attempt to extend the Cleaning Lemma \cite{Rai98,BT09,Ash20,Preskill_notes} beyond stabilizer codes is an indication of this possibility.  
There are other physically-relevant potential extensions of this work.  
For instance, considering only particular erasures may give rise to a description of biased error correcting codes, and one could imagine an extension of our theorem to that setting. Another real-world consideration that would suggest an extension of the theorem would be in the scenario of limited  reconstruction capabilities of the information, with a physical restriction to local operations and classical communication (LOCC).

Further, we have not considered quantum subsystem codes \cite{kribs2005unified,poulin2005stabilizer,nemec2023quantum} or hybrid classical-quantum codes \cite{nemec2021infinite,dauphinais2024stabilizer} in this paper, but we expect the structure theorem can be generalized for such codes (for instance, the notion of unitarily recoverable extends to such codes \cite{KS06,beny2008algebraic}). This suggests that it should be possible to formulate an appropriate notion of quantum subsystem secret sharing schemes, for instance, and then explore them for potential advantages as in the case for quantum error correcting codes. 

We plan to pursue some of these investigations elsewhere, and we invite others to do the same.

\vspace{0.1in} 

{\noindent}{\bf Acknowledgments.} D.W.K. was partially supported by NSERC Discovery Grant RGPIN-2024-400160.  E.C and S.H. were supported by NSF grant \#  2112890.  M.N. was supported by NSF grant \# 2016136.  A.N. would like to thank Mingyu Kang for fruitful discussions.

\bibliography{refs}

\begin{thebibliography}{10}

\bibitem{lidar2013quantum}
Daniel~A Lidar and Todd~A Brun.
\newblock {\em Quantum Error Correction}.
\newblock Cambridge University Press, 2013.

\bibitem{cleve1999share}
Richard Cleve, Daniel Gottesman, and Hoi-Kwong Lo.
\newblock How to share a quantum secret.
\newblock {\em Physical Review Letters}, 83(3):648, 1999.

\bibitem{deLeonetal}
Nathalie~P. de~Leon, Kohei~M. Itoh, Dohun Kim, Karan~K. Mehta, Tracy~E. Northup, Hanhee Paik, B.~S. Palmer, N.~Samarth, Sorawis Sangtawesin, and D.~W. Steuerman.
\newblock Materials challenges and opportunities for quantum computing hardware.
\newblock {\em Science}, 372(6539), 2021.

\bibitem{WKST22}
Yue Wu, Shimon Kolkowitz, Shruti Puri, and Jeff~D. Thompson.
\newblock {Erasure conversion for fault-tolerant quantum computing in alkaline earth Rydberg atom arrays}.
\newblock {\em Nature Communications}, 13(1):4657, 2022.

\bibitem{Ketal23}
Aleksander Kubica, Arbel Haim, Yotam Vaknin, Harry Levine, Fernando Brand\~ao, and Alex Retzker.
\newblock Erasure qubits: Overcoming the ${T}_{1}$ limit in superconducting circuits.
\newblock {\em Physical Review X}, 13:041022, Nov 2023.

\bibitem{KCB23}
Mingyu Kang, Wesley~C. Campbell, and Kenneth~R. Brown.
\newblock Quantum error correction with metastable states of trapped ions using erasure conversion.
\newblock {\em PRX Quantum}, 4:020358, Jun 2023.

\bibitem{Scholl23}
Pascal Scholl, Adam~L. Shaw, Richard Bing-Shiun Tsai, Ran Finkelstein, Joonhee Choi, and Manuel Endres.
\newblock Erasure conversion in a high-fidelity {R}ydberg quantum simulator.
\newblock {\em Nature}, 622:273--278, Oct 2023.

\bibitem{Ma2023}
Shuo Ma, Genyue Liu, Pai Peng, Bichen Zhang, Sven Jandura, Jahan Claes, Alex~P. Burgers, Guido Pupillo, Shruti Puri, and Jeff~D. Thompson.
\newblock High-fidelity gates and mid-circuit erasure conversion in an atomic qubit.
\newblock {\em Nature}, 622:279--284, Oct 2023.

\bibitem{GVRK24}
Shouzhen Gu, Yotam Vaknin, Alex Retzker, and Aleksander Kubica.
\newblock Optimizing quantum error correction protocols with erasure qubits, 2024.

\bibitem{Letal24}
H.~Levine and et~al.
\newblock Demonstrating a long-coherence dual-rail erasure qubit using tunable transmons.
\newblock {\em Physical Review X}, 14:011051, Mar 2024.

\bibitem{AXH15}
Ahmed Almheiri, Xi~Dong, and Daniel Harlow.
\newblock Bulk locality and quantum error correction in {A}d{S}/{CFT}.
\newblock {\em Journal of High Energy Physics}, 2015(4), April 2015.

\bibitem{Harlow2017}
Daniel Harlow.
\newblock The {Ryu--Takayanagi} formula from quantum error correction.
\newblock {\em Communications in Mathematical Physics}, 354(3):865--912, 2017.

\bibitem{Hayden2019}
Patrick Hayden and Geoffrey Penington.
\newblock Learning the alpha-bits of black holes.
\newblock {\em Journal of High Energy Physics}, 2019(12):7, 2019.

\bibitem{akers2019}
Chris Akers and Pratik Rath.
\newblock Holographic {R}enyi entropy from quantum error correction.
\newblock {\em Journal of High Energy Physics}, 2019(5):52, 2019.

\bibitem{Akers2022}
Chris Akers and Geoff Penington.
\newblock {Quantum minimal surfaces from quantum error correction}.
\newblock {\em SciPost Physics}, 12:157, 2022.

\bibitem{nielsen2010quantum}
Michael~A Nielsen and Isaac~L Chuang.
\newblock {\em Quantum Computation and Quantum Information}.
\newblock Cambridge University Press, 2010.

\bibitem{paulsen2002completely}
Vern Paulsen.
\newblock {\em Completely Bounded Maps and Operator Algebras}, volume~78.
\newblock Cambridge University Press, 2002.

\bibitem{cooney2016strong}
Tom Cooney, Mil{\'a}n Mosonyi, and Mark~M Wilde.
\newblock Strong converse exponents for a quantum channel discrimination problem and quantum-feedback-assisted communication.
\newblock {\em Communications in Mathematical Physics}, 344(3):797--829, 2016.

\bibitem{pereira2006representing}
Rajesh Pereira.
\newblock Representing conditional expectations as elementary operators.
\newblock {\em Proceedings of the American Mathematical Society}, 134(1):253--258, 2006.

\bibitem{church2012private}
Amber Church, David~W Kribs, Rajesh Pereira, and Sarah Plosker.
\newblock Private quantum channels, conditional expectations, and trace vectors.
\newblock {\em Quantum Information and Computation}, 11(9-10):774--783, 2011.

\bibitem{knill1997theory}
Emanuel Knill and Raymond Laflamme.
\newblock Theory of quantum error-correcting codes.
\newblock {\em Physical Review A}, 55(2):900, 1997.

\bibitem{C75}
Man-Duen Choi.
\newblock {Completely positive linear maps on complex matrices}.
\newblock {\em Linear Algebra and its Applications}, 10(3):285--290, 1975.

\bibitem{holevo2007complementary}
Alexander~S Holevo.
\newblock Complementary channels and the additivity problem.
\newblock {\em Theory of Probability \& Its Applications}, 51(1):92--100, 2007.

\bibitem{king2005properties}
Christopher King, Keiji Matsumoto, Michael Nathanson, and Mary~Beth Ruskai.
\newblock Properties of conjugate channels with applications to additivity and multiplicativity.
\newblock {\em Markov Process and Related Fields}, 13:391--423, 2007.

\bibitem{ambainis2000private}
Andris Ambainis, Michele Mosca, Alain Tapp, and Ronald De~Wolf.
\newblock Private quantum channels.
\newblock In {\em Proceedings 41st Annual Symposium on Foundations of Computer Science}, pages 547--553. IEEE, 2000.

\bibitem{bartlett2005random}
Stephen~D Bartlett, Patrick Hayden, and Robert~W Spekkens.
\newblock Random subspaces for encryption based on a private shared cartesian frame.
\newblock {\em Physical Review A}, 72(5):052329, 2005.

\bibitem{bartlett2004decoherence}
Stephen~D Bartlett, Terry Rudolph, and Robert~W Spekkens.
\newblock Decoherence-full subsystems and the cryptographic power of a private shared reference frame.
\newblock {\em Physical Review A}, 70(3):032307, 2004.

\bibitem{boykin2003optimal}
P~Oscar Boykin and Vwani Roychowdhury.
\newblock Optimal encryption of quantum bits.
\newblock {\em Physical Review A}, 67(4):042317, 2003.

\bibitem{crann2016private}
Jason Crann, David~W Kribs, Rupert~H Levene, and Ivan~G Todorov.
\newblock Private algebras in quantum information and infinite-dimensional complementarity.
\newblock {\em Journal of Mathematical Physics}, 57(1), 2016.

\bibitem{jochym2013private}
Tomas Jochym-O’Connor, David~W Kribs, Raymond Laflamme, and Sarah Plosker.
\newblock Private quantum subsystems.
\newblock {\em Physical Review Letters}, 111(3):030502, 2013.

\bibitem{kribs2021approximate}
David~W Kribs, Jeremy Levick, Mike Nelson, Rajesh Pereira, and Mizanur Rahaman.
\newblock Approximate quasi-orthogonality of operator algebras and relative quantum privacy.
\newblock {\em Reports on Mathematical Physics}, 87(2):167--181, 2021.

\bibitem{levick2017quantum}
Jeremy Levick, David~W Kribs, and Rajesh Pereira.
\newblock Quantum privacy and schur product channels.
\newblock {\em Reports on Mathematical Physics}, 80(3):333--347, 2017.

\bibitem{kretschmann2008complementarity}
Dennis Kretschmann, David~W Kribs, and Robert~W Spekkens.
\newblock Complementarity of private and correctable subsystems in quantum cryptography and error correction.
\newblock {\em Physical Review A}, 78(3):032330, 2008.

\bibitem{NS06}
Ashwin Nayak and Pranab Sen.
\newblock {Invertible Quantum Operations and Perfect Encryption of Quantum States}.
\newblock {\em Quantum Information and Computation}, 7:103--110, 2006.

\bibitem{KS06}
David~W. Kribs and Robert~W. Spekkens.
\newblock {Quantum error-correcting subsystems are unitarily recoverable subsystems}.
\newblock {\em Physical Review A}, 74(4):042329, 2006.

\bibitem{SN96}
Benjamin Schumacher and M.~A. Nielsen.
\newblock Quantum data processing and error correction.
\newblock {\em Physical Review A}, 54:2629--2635, Oct 1996.

\bibitem{grassl1997codes}
Markus Grassl, Thomas Beth, and Thomas Pellizzari.
\newblock Codes for the quantum erasure channel.
\newblock {\em Physical Review A}, 56(1):33, 1997.

\bibitem{10206477}
Guangkuo Liu, Peixue Wu, Haneul Kim, and Marius Junge.
\newblock Quantum secret sharing and tripartite information.
\newblock In {\em 2023 IEEE International Symposium on Information Theory (ISIT)}, pages 1931--1936, 2023.

\bibitem{gottesman2000theory}
Daniel Gottesman.
\newblock Theory of quantum secret sharing.
\newblock {\em Physical Review A}, 61(4):042311, 2000.

\bibitem{gottesman1996class}
Daniel Gottesman.
\newblock Class of quantum error-correcting codes saturating the quantum hamming bound.
\newblock {\em Physical Review A}, 54(3):1862, 1996.

\bibitem{gottesman1997stabilizer}
Daniel Gottesman.
\newblock {\em Stabilizer codes and quantum error correction}.
\newblock California Institute of Technology, 1997.

\bibitem{PhysRevLett.77.198}
Raymond Laflamme, Cesar Miquel, Juan~Pablo Paz, and Wojciech~Hubert Zurek.
\newblock Perfect quantum error correcting code.
\newblock {\em Physical Review Letters}, 77:198--201, Jul 1996.

\bibitem{PhysRevA.54.3824}
Charles~H. Bennett, David~P. DiVincenzo, John~A. Smolin, and William~K. Wootters.
\newblock Mixed-state entanglement and quantum error correction.
\newblock {\em Physical Review A}, 54:3824--3851, Nov 1996.

\bibitem{Rai98}
E.M. Rains.
\newblock Quantum weight enumerators.
\newblock {\em IEEE Transactions on Information Theory}, 44(4):1388--1394, 1998.

\bibitem{BT09}
Sergey Bravyi and Barbara Terhal.
\newblock {A no-go theorem for a two-dimensional self-correcting quantum memory based on stabilizer codes}.
\newblock {\em New Journal of Physics}, 11(4):043029, 2009.

\bibitem{Ash20}
Alexei Ashikhmin.
\newblock Fidelity of finite length quantum codes in qubit erasure channel.
\newblock In {\em 2020 IEEE International Symposium on Information Theory (ISIT)}, pages 1880--1885, 2020.

\bibitem{Preskill_notes}
John Preskill.
\newblock Cleaning lemma for stabilizer codes, 2022.

\bibitem{Awshalometal21}
David Awschalom, Karl~K. Berggren, Hannes Bernien, Sunil Bhave, Lincoln~D. Carr, Paul Davids, Sophia~E. Economou, Dirk Englund, Andrei Faraon, Martin Fejer, Saikat Guha, Martin~V. Gustafsson, Evelyn Hu, Liang Jiang, Jungsang Kim, Boris Korzh, Prem Kumar, Paul~G. Kwiat, Marko Lon\ifmmode~\check{c}\else \v{c}\fi{}ar, Mikhail~D. Lukin, David~A.B. Miller, Christopher Monroe, Sae~Woo Nam, Prineha Narang, Jason~S. Orcutt, Michael~G. Raymer, Amir~H. Safavi-Naeini, Maria Spiropulu, Kartik Srinivasan, Shuo Sun, Jelena Vu\ifmmode \check{c}\else \v{c}\fi{}kovi\ifmmode~\acute{c}\else \'{c}\fi{}, Edo Waks, Ronald Walsworth, Andrew~M. Weiner, and Zheshen Zhang.
\newblock Development of quantum interconnects (quics) for next-generation information technologies.
\newblock {\em PRX Quantum}, 2:017002, Feb 2021.

\bibitem{Elshaarietal20}
Ali~W. Elshaari, Wolfram Pernice, Kartik Srinivasan, Oliver Benson, and Val Zwiller.
\newblock Hybrid integrated quantum photonic circuits.
\newblock {\em Nature Photonics}, 14:285--298, 2020.

\bibitem{Chiaetal24}
Cleaven Chia, Ding Huang, Victor Leong, Jian~Feng Kong, and Kuan Eng~Johnson Goh.
\newblock Hybrid quantum systems with artificial atoms in solid state.
\newblock {\em Advanced Quantum Technologies}, 7(5):2300461, 2024.

\bibitem{kribs2005unified}
David Kribs, Raymond Laflamme, and David Poulin.
\newblock Unified and generalized approach to quantum error correction.
\newblock {\em Physical Review Letters}, 94(18):180501, 2005.

\bibitem{poulin2005stabilizer}
David Poulin.
\newblock Stabilizer formalism for operator quantum error correction.
\newblock {\em Physical Review Letters}, 95(23):230504, 2005.

\bibitem{nemec2023quantum}
Andrew Nemec.
\newblock Quantum data-syndrome codes: Subsystem and impure code constructions.
\newblock {\em Quantum Information Processing}, 22(11):408, 2023.

\bibitem{nemec2021infinite}
Andrew Nemec and Andreas Klappenecker.
\newblock Infinite families of quantum-classical hybrid codes.
\newblock {\em IEEE Transactions on Information Theory}, 67(5):2847--2856, 2021.

\bibitem{dauphinais2024stabilizer}
Guillaume Dauphinais, David~W Kribs, and Michael Vasmer.
\newblock Stabilizer formalism for operator algebra quantum error correction.
\newblock {\em Quantum}, 8:1261, 2024.

\bibitem{beny2008algebraic}
Cedric Beny, David~W Kribs, and Aron Pasieka.
\newblock Algebraic formulation of quantum error correction.
\newblock {\em International Journal of Quantum Information}, 6(supp01):597--603, 2008.

\end{thebibliography}
\bibliographystyle{unsrt}

\end{document}